\DeclareRobustCommand{\DLLogo}{%
  \begingroup\normalfont
  \kern-1.75pt\includegraphics[align=c,height=1.25\baselineskip]{dl}\kern-1.5pt%
  \endgroup
}
\newtheorem{theorem}{Theorem}
\newtheorem{definition}{Definition}
\newtheorem{example}{Example}
\newtheorem{proposition}{Proposition}
\newtheorem{lemma}{Lemma}
\begin{document}



\title{Modal Separability of Fixpoint Formulae}

\author{Jean Christoph Jung \and Jędrzej Kołodziejski}

\date{TU Dortmund University}



\maketitle
\begin{abstract}
  We study \emph{modal separability} for fixpoint formulae: given two
  mutually exclusive fixpoint formulae $\varphi,\varphi'$, decide whether there is a modal
  formula $\psi$ that separates them, that is, that satisfies
  $\varphi\models\psi\models\neg\varphi'$. This problem has
  applications for finding simple reasons for inconsistency. Our main
  contributions are tight complexity bounds for deciding modal
  separability and optimal ways to compute a separator if it exists.
  More precisely, it is \ExpTime-complete in general and
  \PSpace-complete over words. Separators can be computed in doubly
  exponential time in general and in exponential time over words, and
  this is optimal as well. The results for general structures transfer
  to arbitrary, finitely branching, and finite trees. The word case results
  hold for finite, infinite, and arbitrary words.
\end{abstract}



\section{Introduction}

For given logics $\Lmc,\Lmc^+$, the \emph{$\Lmc$-separability problem
for $\Lmc^+$} is to decide given two $\Lmc^+$-formulae $\varphi,\varphi'$
whether there is an \Lmc-formula $\psi$ that separates
$\varphi$ and $\varphi'$ in the sense that $\varphi\models\psi$ and
$\psi\models\neg\phi'$. Obviously, a separator can only exist when
$\varphi$ and $\varphi'$ are mutually exclusive, and the problem is only
meaningful when $\Lmc$ is less expressive than $\Lmc^+$.
Intuitively, a separator formulated in a ``simpler'' logic \Lmc explains a given inconsistency in a ``complicated'' logic $\Lmc^+$.
Note that, for logics $\Lmc^+$ closed under negation, \Lmc-separability
generalizes the \emph{\Lmc-definability problem for $\Lmc^+$}: decide
whether a given $\Lmc^+$-formula is equivalent to an $\Lmc$-formula.
Indeed, $\varphi\in \Lmc^+$ is equivalent to an \Lmc-formula iff
$\varphi$ and $\neg\varphi$ are \Lmc-separable. Since separability is more
general than definability, solving it requires an even
better understanding of the expressive power of the logics under
consideration.

\begin{example} \label{ex:alcreg}
Consider $\Lmc$ being the modal logic $\ML$, also known under the name
\ALC in the context of description logics. Expressions of the logic
(called \emph{formulae} in $\ML$ terminology and \emph{concepts} in
description logic parlance) describe properties of colored, directed
graphs with a distinguished point called the root. As the more
expressive $\Lmc^+$ take $\PDL$: the extension of $\ML$ with regular
modalities (in DL terms: the extension $\ALC_\textit{reg}$ of \ALC with regular role expressions). Assume the graphs under consideration have edges labelled with colors $A$, $B$ and $C$ and consider properties:
\begin{itemize}
  \item[$P$:] ``There is a path from the root whose labeling belongs to $A^+B$.''
  \item[$P'$:] ``The labeling of every (finite) path from the root belongs to $C^*$.''
\end{itemize}
These (contradictory) properties are expressed by $\PDL$-formulae
$\phi$ and $\phi'$ and it is easy to see that none of them can be
expressed in the weaker $\ML$. Nonetheless, $\phi$ and $\phi'$ are
separated by a simple $\ML$-formula $\psi$ that says: ``there is an $A$-labelled edge from the root''. Thus, $\psi$ serves as an easy explanation of the inconsistency of $\phi$ and $\phi'$.

\end{example}
Generalizing the example, in this paper we investigate
$\ML$-separability of formulae in the
modal $\mu$-calculus
$\muML$~\cite{DBLP:conf/focs/Pratt81,DBLP:journals/tcs/Kozen83}, which
extends $\PDL$~\cite{DBLP:journals/jcss/FischerL79}.
$\muML$ is a general framework capturing logics supporting fixpoints
that is
relevant both for knowledge representation and reasoning and for
verification. It describes all bisimulation-invariant properties
definable in $\MSO$~\cite[Theorem 11]{JW96} and thus encompasses
virtually all specification languages such as $\LTL$ and
$\CTL$~\cite{DBLP:books/daglib/0020348}.

Our results generalize the $\ML$-definability problem for $\muML$ which was shown decidable by Otto~\cite[Main Theorem]{DBLP:conf/stacs/Otto99}. The adaptation of the argument to the more general separability is relatively easy. However, Otto's paper is focussed on deciding the existence of modal definitions. The problem of computing a definition when it exists is not discussed, and it seems that the formula which can be read off from the proof is at least tower-exponential big. This issue was addressed in~\cite{DBLP:conf/csl/LehtinenQ15}. Unfortunately, in this case the approach, although constructive, does not easily generalize from definability to separability. Matching lower bounds are also missing.
We fill the gap by providing a procedure which is \emph{both} fully constructive and works for the more general separation case. Both the constructed formulae and the running time are optimal, as illustrated by suitable examples and reductions.

We consider both general models and ``word models'' which are Kripke
structures in which each point has at most one successor. The latter
are relevant from a verification perspective and for temporal
reasoning. In order to obtain our results we first prove
model-theoretic characterizations in terms of bisimulations. We then
exploit the close connection of $\muML$ to nondeterministic parity
tree automata to give (1)~optimal
procedures for the separability problem and (2)~upper bounds on the
modal depth of a separator, if it exists. In~(1) we show
\ExpTime-completeness of modal separability in general and
\PSpace-completeness over words. The lower bounds are essentially
inherited from satisfiability.
The upper bounds derived in~(2) are
then used together with the automata to compute so-called 
\emph{$n$-uniform consequences}, that is, modal formulae that have
exactly the same modal consequences as a given $\muML$-formula, up to
modal depth $n$. 
These $n$-uniform consequences are then used as 
separators. Also here, our procedures are optimal: they compute
separators of at most double exponential size, and we show that there
are $\muML$-formulae that are expressible in $\ML$ but any equivalent
$\ML$-formulae must have doubly exponential size. This means that
there is a double exponential succinctness gap between $\muML$ and
$\ML$. In the word case, our procedures compute exponentially sized
separators and there is only an exponential succinctness gap. All
lower bounds (both computational and succinctness) already hold for
$\PDL$ ($\ALC_{\textit{reg}}$) in place of $\muML$, and for
definability in place of separability.

It is interesting to note that our results hold over classes of models
definable by $\muML$-formulae. This observation allows us to cover the
more general notion of separation in presence of an \emph{ontology}
(i.e.~a background theory imposing some conditions on models). As long
as the ontology is expressible in $\muML$, separability and
computation of separators reduce to the ontology-free setting.
Without much effort the same observation lets us transfer our results
to finite words, infinite words, and finite trees.

Missing proofs can be found in the appendix.

\paragraph{Related Work.}
Separability has been intensively studied in formal language theory. A seminal result in this area is that separability of regular word languages by a first-order language is decidable in \ExpTime~\cite{DBLP:journals/corr/PlaceZ14}. 
Since $\muML$ over words defines precisely the regular languages and first-order logic captures $\ML$, this is particularly related to our results over words. 

In logic, a recent work investigates the complexity of separating
between formulae supporting counting quantifiers by formulae that do
not support them~\cite{separatingcounting}. The used techniques exploit compactness, which makes them inapplicable to our case and inherently non-constructive.

Another related problem is the question of interpolant existence. An \emph{interpolant} of two \Lmc-formulae $\varphi$ and $\varphi'$ is an
\Lmc-formula $\psi$ with $\varphi\models\psi\models\varphi'$ and such
that the signature of $\psi$ is contained in the signatures of both
$\varphi$ and $\varphi'$. Thus, the problem resembles separability but
the restriction on $\psi$ is in terms of the signature instead of in
terms of the logic. Sometimes this question reduces to entailment, as
many logics enjoy the \emph{Craig interpolation property}: an
interpolant of $\phi$ and $\phi'$ exists whenever $\phi\models\phi'$.
Interpolant existence for logics that lack Craig interpolation has
recently been studied in~\cite{DBLP:conf/lics/JungW21,
DBLP:journals/tocl/ArtaleJMOW23}. The used tools, however, are similar
in nature to the ones from~\cite{separatingcounting} and therefore
inapplicable to our problem.

Finally, a related problem is \emph{separability of data examples}.
There, the task is to separate sets of
pointed structures instead of formulae (see~\cite{DBLP:journals/is/Martins19,DBLP:journals/ai/JungLPW22} and
the references therein). Separability of data examples can be cast as an
instance of (our logical notion of) separability if $\Lmc^+$ is expressive
enough to describe the data examples. Conversely, \Lmc-separability
of formulae $\phi$ and $\phi'$ is the same as data separability of the
(possibly infinite) sets of their models by an \Lmc-formula.

\section{Preliminaries}

Assuming familiarity of the reader with modal logic and the
modal $\mu$-calculus, we recall here only the main notions and refer
to~\cite{DBLP:books/el/07/BBW2007} for more details.

\paragraph{Syntax.}
We consider modal logic $\ML$ and its fixpoint extension $\muML$ over
a \emph{modal signature} consisting of two finite sets: \emph{actions} $\Actions$ and \emph{propositions} $\Propositions$. The syntax of $\ML$ is given as:
\[
  \phi ::= \top\ |\ \bot\ |\ \atProp\ |\ \neg\atProp\ |\ \phi\vee\phi\ |\ \phi\wedge\phi\ |\ \diamond{\action}\phi\ |\ \boxmodal{\action}\phi\
\]
with $\atProp\in\Propositions$ and $\action\in\Actions$. If
$\Actions=\{\action\}$ is a singleton, we use $\Diamond\phi$ and $\Box\phi$ in place of
$\diamond{\action}\phi$ and $\boxmodal{\action}\phi$. The syntax of $\muML$ is obtained by extending the above with additional clauses:
\[
  \phi ::= x\ |\ \mu x.\phi\ |\ \nu x.\phi
\]
where $x$ belongs to a fixed set $\Var$ of variables. The restriction
to a fixed finite signature is only for the sake of readability. All results
in the paper remain true with arbitrary signature.

\paragraph{Semantics.}
The models we consider are pointed
Kripke structures. That is, a model $\M$ consists of a set $M$ (called
its \emph{universe}) with a distinguished point $\point_I\in M$ called
the \emph{root}, an interpretation
${\arrowAction}\subset M\times M$ for every
$\action\in\Actions$ and a valuation
$\val:M\to\powerset{\Propositions}$. We call the set
$\powerset{\Propositions}$ \emph{colors} and denote it by $\alphabet$.
Both $\ML$ and $\muML$ are interpreted in points of models in a
standard way. Since models are by definition
pointed we write $\M\models\phi$ meaning that the root of $\M$
satisfies $\phi$. 
The same symbol denotes entailment:
$\phi\models\psi$ means that every model of $\phi$ is a model of
$\psi$. In the case only models from some fixed class $\C$ 
are considered we talk about satisfiability and entailment
\emph{over} $\C$ and in the latter case write
$\phi\models_\C\psi$.

A particularly relevant class of models are trees. 
A model $\M$ is a \emph{tree} if the underlying directed
graph $(M,\bigcup\{\arrowAction\ |\ \action\in\Actions\})$ is a tree
with $v_I$ as its root. The \emph{branching} or \emph{outdegree} of a
point is the number of its children in this underlying graph.
The class of
all trees is denoted by $\Trees{}$. 
We identify \emph{words} (both finite and infinite) over alphabet
$\Sigma$ with trees over a single action of outdegree at most one.
Points of such models are interpreted as positions in the word, the
unique accessibility relation represents the successor relation, and
the valuation determines the letter at each position.
A \emph{prefix} of a tree is a subset of its universe closed under
taking ancestors. When no confusion arises we identify a prefix
$N\subset M$ with the induced subtree $\N$ of $\M$ that has $N$ as its
universe. The \emph{depth} of a point is the distance from the root.
The prefix of depth $n$ (or just \emph{$n$-prefix}) is the set of all
points at depth at most $n$ and is denoted by $M_{|_n}$ (and the
corresponding subtree by $\M_{|_n}$).

We define bisimulations and bisimilarity in a standard way except that
in the case of trees for convenience we assume that bisimulations only
link points at the same depth. An $n$-step bisimulation (or just
\emph{$n$-bisimulation}) between trees $\M$ and $\N$ is a bisimulation
between their $n$-prefixes. We denote 
$n$-bisimilarity by $\bis^n$.


\paragraph{Size of formulae.} 
The \emph{size} of a formula $\varphi$, denoted $|\varphi|$, is the number of nodes in its syntax
tree. Similarly, its \emph{depth} is the maximal length of paths in the syntax tree. The
depth of a formula should not be confused with its \emph{modal depth}
which is the maximal nesting of modal operators; all formulae of modal depth at most $n$ are denoted $\ML^n$.

When we specify formulae in the paper, we use syntactic sugar
$\bigvee\Phi$, $\bigwedge\Phi$, and \emph{nabla}
$\nabla\Phi$, for finite sets of formulae $\Phi$.
The first two are self-explanatory and allow for higher branching in
the syntax tree. The last one, $\nabla\Phi$,
intuitively means that ``every formula in $\Phi$ is true in some child
and every child satisfies some formula from $\Phi$'' and is an
abbreviation for 
\begin{align}\label{eq:nabla}
  \textstyle\nabla\Phi = \bigwedge_{\phi\in\Phi}\Diamond\phi \wedge \Box\bigvee_{\phi\in\Phi}\phi.
\end{align}
It is well-known that $\bigvee\Phi$ and $\bigwedge\Phi$ can be
rewritten into basic syntax under polynomial cost. We also include the colors $\alphabet$
directly in the syntax: $c\in\alphabet$ is a shorthand for the formula
$\bigwedge\{\atProp,\neg\atProp'\ |\ \atProp\in c, \atProp'\notin
c\}$. Rewriting colors increases the size only by a factor
linear in $|\Propositions|$.
%

\paragraph{\bf Automata.} 
Throughout the paper we
use automata over tree models of both bounded and arbitrary outdegree.  A
\emph{nondeterministic parity tree automaton (NPTA)} is a tuple
$\A=(Q,\Sigma,q_I,\delta,\rank)$ where $Q$ is a finite set of states,
$q_I\in Q$ is the initial state, $\Sigma$ is the alphabet fixed above,
and $\rank$ assigns each state a
priority. The transition function $\delta$ is of type:
\[
\delta:Q\times\alphabet \to \powerset{\powerset{Q}}.
\]
Intuitively,
$\delta(q,c)=\{S_1,...,S_l\}$ means that in the state $q$ upon reading
color $c$ the automaton (i) chooses a transition $S_i$ and (ii) labels
all the children of the current point with states from $S_i$ so that
every $p\in S_i$ is assigned to some child. A run of $\A$ on a tree
$\M$ is an assignment $\rho:M\to Q$ consistent with $\delta$ in such
sense and sending the root of the tree to $q_I$. The run is accepting
if for every infinite path $v_0,v_1\ldots$ in $\M$ the sequence
$\rank(\rho(v_0)),\rank(\rho(v_1)),\ldots$ satisfies the parity
condition. We write $\M\models\Amc$ in case \Amc has an accepting run
on $\M$. An automaton that is identical to $\A$ except that the
original initial state is replaced with $q$ is denoted
$\A[q_I\mapsfrom q]$. We refer with NPWA to an NPTA working over
words. 

In NPTAs over trees of bounded outdegree $k$ it might be more common
to use a transition function of type $\delta:Q\times \alphabet\to
\powerset{Q^k}$, but the difference is not essential: our NPTAs can be
represented in this way and conversely, all relevant constructions for
such NPTAs can be adapted to our setting. Most importantly, we rely on
the following classical result (see for example the discussion
in~\cite{DBLP:conf/icalp/Vardi98} and the well-presented Dealternation Theorem~5.7 in~\cite{BojanCzerwinski18}):
\begin{theorem} \label{thm:munpta}
  For every $\muML$-formula $\varphi$, we can construct an equivalent
  NPTA $\Amc$, that is, $\Mmc\models \varphi$ iff $\Mmc\models \Amc$,
  for every tree $\Mmc$, with number of states at most exponential in
  $|\varphi|$. If we consider models of bounded outdegree $k$ then
  $\Amc$ is computed in exponential time, otherwise in doubly exponential time. 
\end{theorem}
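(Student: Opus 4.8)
The plan is to obtain $\Amc$ in two stages: first translate $\varphi$ into an \emph{alternating} parity tree automaton of linear size, then dealternate. For the first stage I would use the textbook syntactic translation of the modal $\mu$-calculus into automata (\cite{JW96}, and the presentation in \cite{BojanCzerwinski18}): take as states the subformulae in the closure of $\varphi$; define the alternating transition function from the one-step semantics of the connectives, with conjunctions and boxes giving universal choices, disjunctions and diamonds existential ones, and $\mu x.\psi$, $\nu x.\psi$ unfolded to $\psi$ with $x$ pointing back to its binder; and assign priorities by ranking fixpoint variables so that $\nu$-variables get even values, $\mu$-variables odd ones, and an outer variable strictly exceeds every variable it depends on. This yields an alternating automaton $\mathcal{B}$ with $O(|\varphi|)$ states and at most $|\varphi|$ priorities, and the standard correctness argument identifies the acceptance game of $\mathcal{B}$ on any tree $\Mmc$ with the model-checking game of $\varphi$ on $\Mmc$, so $\Mmc\models\varphi$ iff $\mathcal{B}$ accepts $\Mmc$.

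For the second stage I would invoke the simulation theorem for alternating parity tree automata (Dealternation Theorem~5.7 in \cite{BojanCzerwinski18}, see also \cite{DBLP:conf/icalp/Vardi98}): an alternating parity automaton with $m$ states and priorities bounded by $d$ is equivalent to a nondeterministic one whose states pair a ``macro-state'' — a set of $\mathcal{B}$-states recording the threads kept alive by a guessed strategy of the accepting player — with a bounded progress-measure / latest-appearance-record component; there are $2^{O(md\log(md))}$ such states, which is exponential in $|\varphi|$ by the first stage. I would treat correctness (which reduces to positional determinacy of parity games) as a black box from the cited sources. For arbitrary outdegree the same construction is run on the $\nabla$-style automata whose transitions have type $Q\times\alphabet\to\powerset{\powerset{Q}}$, i.e.\ exactly the NPTAs of our definition.

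It then remains to track the complexity in the two branching regimes. When only trees of outdegree at most $k$ are considered, the dealternated automaton may be presented with a transition function of type $Q\times\alphabet\to\powerset{Q^k}$, so its entire description — hence the time to write it down — is $|Q|^{O(k)}\cdot|\alphabet|$, exponential in $|\varphi|$ for fixed $k$. Over trees of arbitrary outdegree the state set is still only exponential in $|\varphi|$, but now the transition table ranges over $\powerset{\powerset{Q}}$ and may itself be doubly exponential in $|\varphi|$, so $\Amc$ is computed in doubly exponential time. In both cases $\Mmc\models\varphi$ iff $\Mmc\models\Amc$ follows by composing the two stages.

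The main obstacle — and the only part I would not reprove from scratch — is the correctness of the dealternation: designing the progress-measure component so that an accepting run of $\Amc$ certifies, simultaneously along every branch, a winning strategy of the accepting player in $\mathcal{B}$'s acceptance game. This is classical, so I would cite it; the remaining steps are a routine syntactic construction and complexity bookkeeping.
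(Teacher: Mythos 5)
Your proposal is correct and follows exactly the route the paper itself takes: the paper states this as a classical result and points to the same sources (the discussion in \cite{DBLP:conf/icalp/Vardi98} and Dealternation Theorem~5.7 in \cite{BojanCzerwinski18}) for the formula-to-alternating-automaton translation and the subsequent dealternation, with the bounded- versus arbitrary-outdegree complexity gap coming, as you say, from the size of the transition table rather than the state set.
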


\section{Foundations of Separability}\label{sec:separability}

We start with recalling the notion of separability and discuss some
of its basic properties. 
\begin{definition}
  Given $\phi,\phi'\in\muML$, a \emph{modal separator of
  $\varphi,\varphi'$} is $\psi\in\ML$
  with $\phi\models\psi$ and $\psi\models\neg\phi'$. 
  It is a modal
  separator \emph{over a class} $\C$ if $\phi\models_\C\psi$ and
  $\psi\models_\C\neg \varphi'$. 
\end{definition}
The notion induces the problem of \emph{modal separability}: given two
$\muML$-formulae
$\varphi,\varphi'$, decide whether a modal separator exists. Clearly, $\ML$-definability of $\varphi$ or $\varphi'$ is a
sufficient condition for the existence of a modal separator between
$\varphi,\varphi'$. However, Example~\ref{ex:alcreg} shows that it is
not a necessary one: neither $\varphi$ nor $\varphi'$ are 
$\ML$-definable, yet a separator exists. We make some foundational
observations. 

Inspired by the notion of Craig interpolation, one could also consider the notion of a
\emph{Craig} modal separator, which is a modal separator $\psi$ of
$\varphi,\varphi'$ which only uses symbols occurring in both $\phi$ and $\phi'$. However, based on the fact that $\ML$ enjoys
Craig interpolation, we show in Theorem~\ref{thm:craig} (proof in
Appendix~\ref{app:craig}) that Craig modal separability and modal
separability coincide. Since $\ML$ enjoys Craig interpolation over 
many classes of models~\cite[Theorem 1]{marx99}, Theorem~\ref{thm:craig} remains true over all classes
of models considered below. We thus focus on modal separability.
%
\begin{theorem}\label{thm:craig}
  $\varphi,\varphi'\in\muML$ admit a modal separator iff they admit a
  Craig modal separator. 
\end{theorem}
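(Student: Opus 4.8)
The plan is to reduce Craig modal separability to modal separability via Craig interpolation for $\ML$. One direction is trivial: a Craig modal separator is by definition a modal separator, so if $\varphi,\varphi'$ admit a Craig modal separator they admit a modal separator. For the converse, suppose $\psi\in\ML$ separates $\varphi$ and $\varphi'$, i.e.\ $\varphi\models\psi\models\neg\varphi'$. The difficulty is that $\psi$ may mention actions or propositions occurring in neither $\varphi$ nor $\varphi'$. I would split the signature cleanup into the "$\varphi$ side" and the "$\varphi'$ side" and treat them by two applications of Craig interpolation.

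\medskip

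First I would handle the symbols of $\psi$ that do not occur in $\varphi$. From $\varphi\models\psi$ we have $\varphi\models\psi$, so by Craig interpolation for $\ML$ applied to the valid entailment $\varphi\models\psi$ there is $\chi\in\ML$ using only symbols common to $\varphi$ and $\psi$ with $\varphi\models\chi\models\psi$. Then $\chi$ uses only symbols of $\varphi$ and still satisfies $\varphi\models\chi$ and $\chi\models\psi\models\neg\varphi'$, so $\chi$ is again a modal separator but now its signature is contained in that of $\varphi$. Second, I would repeat the argument on the other side: from $\chi\models\neg\varphi'$ and Craig interpolation there is $\psi^\ast\in\ML$ over the symbols common to $\chi$ and $\neg\varphi'$ with $\chi\models\psi^\ast\models\neg\varphi'$. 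Since the signature of $\chi$ is already contained in that of $\varphi$, the common symbols of $\chi$ and $\varphi'$ are exactly symbols occurring in both $\varphi$ and $\varphi'$. Combining with $\varphi\models\chi$ we get $\varphi\models\psi^\ast\models\neg\varphi'$, so $\psi^\ast$ is a Craig modal separator.

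\medskip

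The one point that needs care is whether ordinary Craig interpolation for $\ML$ is available for entailments of the shape $\varphi\models\psi$ where $\varphi\in\muML$ and $\psi\in\ML$ — interpolation is classically stated for two $\ML$-formulae. The fix is that interpolation only needs the \emph{antecedent as a class of models}: what we use is that whenever the class of models of $\varphi$ is contained in the class of models of $\psi$, there is an $\ML$-formula over the common signature sandwiched in between, and this follows from the standard bisimulation/model-theoretic proof of interpolation for $\ML$ (one builds the interpolant from the bisimulation type of the models of $\varphi$ in the restricted signature, never using that $\varphi$ itself is modal). I expect this — phrasing interpolation so it applies with a $\muML$-antecedent — to be the only real obstacle; everything else is the routine two-step sandwiching above. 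Over a restricted class $\C$ of models, the same argument goes through verbatim provided $\ML$ has Craig interpolation over $\C$, which by \cite[Theorem 1]{marx99} holds for all the classes considered here.
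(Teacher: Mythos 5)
Your proof is correct, but it takes a genuinely different route from the paper's. You apply Craig interpolation twice, directly to the entailments $\varphi\models\psi$ and $\chi\models\neg\varphi'$, each of which has a $\muML$-formula on one side; this forces you to first establish a strengthened interpolation property ($\ML$-interpolation with a $\muML$-formula as antecedent resp.\ consequent), which you correctly flag as the one nontrivial point. The paper sidesteps this: it replaces $\varphi$ and $\varphi'$ by their $n$-uniform consequences $\theta,\theta'\in\ML^n$ (where $n$ is the modal depth of $\psi$), observes that $\theta\models\psi\models\neg\theta'$ and hence $\theta\models\neg\theta'$, and applies ordinary $\ML$-to-$\ML$ Craig interpolation once. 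What your approach buys is independence from the $n$-uniform-consequence machinery; what it costs is that your key lemma is not an off-the-shelf citation. Your justification --- that the model-theoretic proof of $\ML$-interpolation only uses bisimulation invariance of the antecedent --- names the right reason, but making it precise requires redoing the amalgamation step: given an $n$-step bisimulation over the shared signature between a model of $\chi\wedge\neg\psi$ and a model of $\varphi$, one must build a single model that is fully $\mathrm{sig}(\varphi)$-bisimilar to the latter and $n$-step $\mathrm{sig}(\psi)$-bisimilar to the former (essentially the $\M_Z$ construction from the paper's proof of Proposition~\ref{prop:separability reformulation}, adapted to restricted signatures). Once unfolded, your interpolant for $\varphi\models\psi$ is exactly the disjunction of depth-$n$ types over the shared signature consistent with $\varphi$, i.e.\ a signature-restricted $n$-uniform consequence, so the two proofs converge at their technical core. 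One cosmetic point: in your second step the common symbols of $\chi$ and $\varphi'$ are merely \emph{contained in}, not ``exactly'', those of $\varphi$ and $\varphi'$ --- which is all you need.
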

Inspired by the notion of \emph{uniform interpolation}~\cite{Visser_1996,
DBLP:conf/aiml/DAgostinoH96}, it is natural to ask whether every $\varphi\in\muML$
admits a \emph{uniform modal separator}, that is, a formula $\psi\in \ML$
that is a modal separator of $\varphi,\varphi'$ for every
$\varphi'\in\muML$ with $\varphi\models\neg \varphi'$. However, substituting $\neg\phi$ for $\phi'$ we get that the uniform modal separator $\psi$ for $\phi$ is actually equivalent to $\phi$. Consequently, a $\muML$-formula has a uniform modal separator iff it is modally definable.
This is contrast with
the fact that both $\ML$~\cite{Visser_1996} and $\muML$~\cite{DBLP:conf/aiml/DAgostinoH96} enjoy uniform
interpolation.

Since $\muML$ has both the finite model property and the (finitely
branching) tree model property, the notions of a modal separator over
finite models, arbitrary tree models, and finitely branching tree
models all coincide with modal separator (over arbitrary models).
Unsurprisingly, this does not apply to the class of all finite trees.
\begin{example}\label{ex:finitetrees}
  Consider a $\muML$-formula $\varphi_\infty=\nu x.\Diamond x$ expressing that there
  exists an infinite path originating in the root.
  It is
  satisfiable, but unsatisfiable over finite trees. Thus
  $\bot$ is an $\ML$-definition of $\varphi_\infty$ over finite trees,
  but $\varphi_\infty$ is not $\ML$-definable (over arbitrary models).
\end{example}
%
We deal with separability over finite trees as follows.
Call a class $\C$ of models \emph{$\muML$-definable in
$\D$} if there is a $\muML$-formula $\theta$ such that
$\M\in\C$ iff $\M\models\theta$, for all models $\M\in \D$.
\begin{lemma} \label{lem:reduction-definable}
  Let $\C$ be $\muML$-definable in $\D$ by $\theta$ and let
  $\psi\in\ML$. Then $\psi$ is a modal separator of
  $\varphi,\varphi'\in\muML$ over $\C$ iff $\psi$ is a modal separator
  of $\theta\wedge\varphi$ and $\theta\wedge\varphi'$ over
  $\D$.
  %
\end{lemma}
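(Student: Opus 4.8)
The plan is to prove the two implications by directly unfolding the relevant definitions, with no construction involved. Recall that ``$\psi$ is a modal separator of $\alpha,\alpha'$ over a class $\E$'' abbreviates the conjunction of $\alpha\models_\E\psi$ and $\psi\models_\E\neg\alpha'$, and that $\C$ being $\muML$-definable in $\D$ by $\theta$ means $\C\subseteq\D$ and, for every $\M\in\D$, $\M\in\C$ iff $\M\models\theta$; equivalently $\C=\{\M\in\D\mid\M\models\theta\}$. Hence the claimed equivalence splits into two independent equivalences, one for each conjunct of the separator condition, which I would establish separately and then combine.

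The first equivalence is $\varphi\models_\C\psi$ iff $\theta\wedge\varphi\models_\D\psi$. For ``$\Rightarrow$'', take any $\M\in\D$ with $\M\models\theta\wedge\varphi$; since $\M\in\D$ and $\M\models\theta$ we get $\M\in\C$, and then $\M\models\varphi$ together with $\varphi\models_\C\psi$ gives $\M\models\psi$. For ``$\Leftarrow$'', take any $\M\in\C$ with $\M\models\varphi$; since $\C\subseteq\D$ and $\M\in\C$ we have $\M\in\D$ and $\M\models\theta$, so $\M\models\theta\wedge\varphi$, and the hypothesis yields $\M\models\psi$.

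The second equivalence is $\psi\models_\C\neg\varphi'$ iff $\psi\models_\D\neg(\theta\wedge\varphi')$. Here it is cleanest to reformulate both sides negatively: the left side says that no $\M\in\C$ satisfies $\psi$ and $\varphi'$, while the right side says that no $\M\in\D$ satisfies $\psi$, $\theta$, and $\varphi'$ simultaneously; since for $\M\in\D$ the properties ``$\M\in\C$'' and ``$\M\models\theta$'' are interchangeable, and $\C\subseteq\D$, these two statements coincide. Conjoining the two equivalences gives exactly that $\psi$ separates $\varphi,\varphi'$ over $\C$ iff $\psi$ separates $\theta\wedge\varphi$ and $\theta\wedge\varphi'$ over $\D$.

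The argument is pure bookkeeping, so there is no genuine obstacle; the only place that warrants a moment's care is the asymmetric syntactic role of $\theta$ in the two conjuncts. On the $\varphi$-side $\theta$ is conjoined with the entailing formula (strengthening a hypothesis), whereas on the $\varphi'$-side the correct reformulation is $\psi\models_\D\neg(\theta\wedge\varphi')$ rather than $\psi\wedge\theta\models_\D\neg\varphi'$. Checking that these two formulations agree once all models in play are restricted to $\D$ — where satisfying $\theta$ is exactly membership in $\C$ — is what makes the second equivalence go through, and it is the single step I would write out in full.
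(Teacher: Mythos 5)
Your proof is correct and is exactly the routine definition-unfolding argument; the paper in fact gives no proof of this lemma at all, treating it as immediate, so there is nothing to diverge from. The only point worth flagging is that you (correctly) read the hypothesis as including $\C\subseteq\D$ — the paper's definition of ``$\muML$-definable in $\D$'' only quantifies over $\M\in\D$ and leaves this containment implicit, but it is genuinely needed for the right-to-left direction of your first equivalence, and it holds in all of the paper's applications.
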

Intuitively, Lemma~\ref{lem:reduction-definable} provides us with a
reduction of modal separability over $\Cmc$ to modal
separability over (the larger) $\Dmc$. It has a number of interesting 
consequences. First, observe that the formula $\neg\varphi_\infty$ from
Example~\ref{ex:finitetrees} defines
the class of finite trees in the class of all finitely branching
trees. Hence $\neg\phi_\infty$ provides a reduction of modal separability over finite trees
to modal separability over finitely branching trees, and thus to modal
separability.
Similarly, and again using $\varphi_\infty$, Lemma~\ref{lem:reduction-definable} reduces modal separability over
finite words and over infinite words to modal separability
over (arbitrary) words.
Finally, the lemma
can be used to reduce modal separability relative to background knowledge
to modal separability. Call $\psi$ a modal separator of
$\varphi,\varphi'\in \muML$ \emph{relative to $\theta_0\in\muML$} if
it is a modal separator of $\varphi,\varphi'$ over the class of models
satisfying $\theta_0$ \emph{in every point}. This setting is
most relevant for the DL community since $\theta_0$ plays the
role of an ontology. In particular, the question whether two
$\ALC_\textit{reg}$-concepts $\varphi,\varphi'$ are separable by an
\ALC-concept relative
to an $\ALC_\textit{reg}$-ontology is an instance of that problem
(recall that every $\ALC_\textit{reg}$-concept can be expressed as a
$\muML$-formula). Let $\theta$
be the $\muML$-formula expressing that $\theta_0$ is satisfied in
every point reachable via the accessibility relations. Using
Lemma~\ref{lem:reduction-definable} and bisimulation invariance of $\muML$,
it is routine to verify that $\psi$ is a modal separator of $\varphi,\varphi'$
relative to $\theta_0$ iff $\psi$ is a modal separator of
$\theta\wedge \varphi$ and $\theta\wedge\varphi'$. 

In view of what was said so far, we will from now on concentrate on
deciding modal separability over general and word models and computing
separators if they exist. A main ingredient for both tasks is to show 
that if there is a modal separator for
$\muML$-formula $\varphi,\varphi'$, then there is one of modal depth $n$
at most exponential in the size of $\varphi$ and $\varphi'$. 
As a necessary tool for showing this exponential bound on $n$, and for efficiently deciding if a given $n$ suffices, in Appendix~\ref{app:model theory} we establish the following model-theoretic characterization. 
Fix $\phi,\phi'\in\muML$ for
the rest of the paper and denote their size by $k=|\phi|+|\phi'|$. 

\begin{proposition}\label{prop:separability reformulation}
Let $n\in\NN$. The following are equivalent:
\begin{enumerate}[label=(\roman*)]
  \item\label{it:separability} There is $\psi\in\ML$ of modal depth $n$ separating $\phi$ and $\phi'$;
  \item\label{it:bound bisimilar} For all models $\M$ and $\M'$
    \underline{bisimilar} up to depth $n$: $\M\models\phi$ implies
    $\M'\not\models\phi'$;
  \item\label{it:bound identical} For all trees $\M$ and $\M'$
    \underline{identical} up to depth $n$: $\M\models\phi$ implies
    $\M'\not\models\phi'$;
 \item\label{it:bound id + bounded} For all trees $\M$ and $\M'$ \underline{identical} up to depth $n$ and whose \underline{branching is bounded} by $k$:\\ $\M\models\phi$ implies $\M'\not\models\phi'$.
\end{enumerate}
\end{proposition}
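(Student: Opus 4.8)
The plan is to prove \ref{it:separability}$\Leftrightarrow$\ref{it:bound bisimilar} directly, to observe that \ref{it:bound bisimilar}$\Rightarrow$\ref{it:bound identical}$\Rightarrow$\ref{it:bound id + bounded} is immediate, and to put the real work into the remaining implication \ref{it:bound id + bounded}$\Rightarrow$\ref{it:bound bisimilar}. For \ref{it:separability}$\Rightarrow$\ref{it:bound bisimilar} I would use that $\ML^n$ is invariant under $n$-bisimulation: if $\psi\in\ML^n$ separates and $\M\bis^n\M'$ with $\M\models\phi$, then $\M\models\psi$, hence $\M'\models\psi$, hence $\M'\not\models\phi'$. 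For \ref{it:bound bisimilar}$\Rightarrow$\ref{it:bound identical}$\Rightarrow$\ref{it:bound id + bounded} one only notes that trees identical up to depth $n$ are in particular $n$-bisimilar, and that trees of branching $\le k$ form a subclass of all trees, so each successive hypothesis quantifies over fewer pairs of models.

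For \ref{it:bound bisimilar}$\Rightarrow$\ref{it:separability} I would exhibit an explicit separator built from characteristic formulae. Over the fixed finite signature there are, up to logical equivalence, only finitely many formulae of $\ML^n$, hence only finitely many $n$-bisimulation types; write $\chi_\tau\in\ML^n$ for the characteristic ($\nabla$-style Hintikka) formula of type $\tau$, so that $\M\models\chi_\tau$ iff $\M$ has $n$-type $\tau$. Put $\psi:=\bigvee\{\chi_\tau\mid\tau$ is the $n$-type of some model of $\phi\}$, a finite disjunction of modal depth $n$. Then $\phi\models\psi$ trivially, and if $\M'\models\psi$ then $\M'$ has the same $n$-type as some model of $\phi$, so \ref{it:bound bisimilar} yields $\M'\not\models\phi'$; hence $\psi\models\neg\phi'$.

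The core is \ref{it:bound id + bounded}$\Rightarrow$\ref{it:bound bisimilar}, which I would prove by contraposition: from $\M\bis^n\M'$ with $\M\models\phi$ and $\M'\models\phi'$ I must build trees $\N\models\phi$, $\N'\models\phi'$ of branching $\le k$ that are identical up to depth $n$. First I would unravel both models into trees; this preserves $\muML$-satisfaction by bisimulation invariance and preserves $\bis^n$ since $\bis^n$ is transitive and each model is fully (hence $n$-) bisimilar to its unravelling, so one may assume $\M,\M'$ are trees. To upgrade ``$n$-bisimilar'' to ``identical up to depth $n$'' I would take an $n$-bisimulation $Z$ between $\M_{|_n}$ and $\M'_{|_n}$ and form the bisimulation product $P$ whose depth-$d$ nodes ($d\le n$) are the pairs of length-$d$ root paths of $\M_{|_n}$ and $\M'_{|_n}$ that use the same actions and whose vertices are $Z$-related at every level; the two projections exhibit $P\bis\M_{|_n}$ and $P\bis\M'_{|_n}$. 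Let $\N$ be $P$ with the subtree of $\M$ rooted at $\pi_1(p)$ grafted below each depth-$n$ leaf $p$, and $\N'$ the symmetric object built from $\pi_2$ and $\M'$. Using the relevant projection on the top $n$ levels and the identity on the grafted subtrees one checks $\N\bis\M$ and $\N'\bis\M'$, so (again by bisimulation invariance of $\muML$) $\N\models\phi$ and $\N'\models\phi'$, while $\N_{|_n}=P=\N'_{|_n}$ by construction.

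It then remains to bring the branching down to $k$ without breaking the agreement on the $n$-prefix. I would fix memoryless winning strategies $\sigma,\sigma'$ for the verifier in the parity model-checking games of $\phi$ over $\N$ and of $\phi'$ over $\N'$; such a strategy selects at most one witness successor per $\Diamond$-subformula at each node, hence at most $|\phi|$ (resp.\ $|\phi'|$) successors in all. Then I would prune the two trees simultaneously: at each node of $P$ of depth $<n$ keep exactly the successors selected there by $\sigma$ together with those selected by $\sigma'$ --- a set that is the same set of successors in $\N$ and in $\N'$, of size at most $|\phi|+|\phi'|=k$ --- and inside the grafted subtrees keep in $\N$ those selected by $\sigma$ and in $\N'$ those selected by $\sigma'$. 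Deleting successors only removes options for the opponent's $\Box$-moves while leaving every move of the verifier's strategy available, so the pruned trees still satisfy $\phi$ and $\phi'$ respectively; they now have branching $\le k$ and still coincide up to depth $n$, contradicting \ref{it:bound id + bounded}. The hard part will be exactly this last step: bounding the branching of the \emph{shared} prefix while keeping the two models identical on it is what forces the simultaneous two-strategy pruning, and is where the bound $k=|\phi|+|\phi'|$ (rather than a bound in terms of automaton size) comes from; verifying the back-and-forth condition for the graftings at the depth-$n$ ``seam'' is the other point needing a little care.
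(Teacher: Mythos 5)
Your proof is correct and rests on exactly the same ingredients as the paper's: invariance of $\ML^n$ under $n$-bisimulation, the disjunction of $n$-types as an explicit separator, a product/grafting construction over the $n$-bisimulation $Z$ to upgrade $n$-bisimilarity to identity of $n$-prefixes, and positional winning strategies in the two model-checking parity games to prune the branching to $k=|\phi|+|\phi'|$. The only difference is organizational: the paper closes the cycle as (i)$\Rightarrow$(iv)$\Rightarrow$(iii)$\Rightarrow$(ii)$\Rightarrow$(i), performing the game-based trimming and the $Z$-construction as two separate implications, whereas you merge them into a single (iv)$\Rightarrow$(ii) step.
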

Based on Proposition~\ref{prop:separability reformulation}, we show
that $\ML$-separability of $\muML$-formulae is \ExpTime-complete and
thus not harder than $\ML$-definability. 
\begin{theorem}\label{thm:models-complexity}
  Modal separability of $\muML$-formulae is \ExpTime-complete over
  arbitrary models. 
\end{theorem}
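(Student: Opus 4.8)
The plan is to establish the two bounds separately, relying on Proposition~\ref{prop:separability reformulation} for the upper bound and on a reduction from $\muML$-satisfiability for the lower bound.

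\textbf{Upper bound.} By Proposition~\ref{prop:separability reformulation}, a modal separator for $\varphi,\varphi'$ exists iff condition~\ref{it:bound id + bounded} holds for \emph{some} $n\in\NN$, i.e.\ there is $n$ such that no two $k$-branching trees that are identical up to depth $n$ satisfy $\varphi$ and $\varphi'$ respectively. The first step is to argue that it suffices to test a single, at most exponential, value of $n$: if $n$ works, then every larger $n$ works too (monotonicity, since identical up to depth $n'$ implies identical up to depth $n$ for $n'\ge n$), so I need an exponential upper bound $n_0=2^{\mathrm{poly}(k)}$ guaranteeing that \emph{if some $n$ works then $n_0$ works}. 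This is exactly the ``exponential bound on the modal depth of a separator'' promised in the introductory text immediately before the proposition; I would invoke it (it is established via the automata-theoretic machinery announced there) and thereby reduce the problem to: decide whether condition~\ref{it:bound identical} holds for the fixed exponential $n_0$. The second step is to decide this in \ExpTime. By Theorem~\ref{thm:munpta}, convert $\varphi$ and $\varphi'$ into NPTAs $\Amc,\Amc'$; over models of bounded outdegree $k$ these are computed in exponential time and have exponentially many states. Negating~\ref{it:bound identical}, I must check whether there exist two $k$-branching trees $\Mmc,\Mmc'$ with $\Mmc_{|_{n_0}}=\Mmc'_{|_{n_0}}$, $\Mmc\models\Amc$, and $\Mmc'\models\Amc'$. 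Since $n_0$ is only exponential, this can be phrased as nonemptiness of a product-style automaton that, on its first $n_0$ levels, reads a single common prefix and runs both $\Amc$ and $\Amc'$ on it, and below level $n_0$ lets the two runs proceed on independent continuations; nonemptiness of parity tree automata is decidable in time polynomial in the number of states times exponential in the number of priorities, hence overall in \ExpTime. (Alternatively, and perhaps more cleanly, encode the whole condition as a single $\muML$- or MSO-formula of exponential size over $k$-branching trees and appeal to the known \ExpTime\ bound for its satisfiability; either route gives the membership.)

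\textbf{Lower bound.} Here I would reduce $\muML$-satisfiability (equivalently, already $\PDL$/$\ALC_{\mathit{reg}}$-satisfiability, which is \ExpTime-hard) to modal \emph{non}-separability, or directly to separability. The cleanest approach: given $\chi\in\muML$, I want formulae $\varphi,\varphi'$ that are modally separable iff $\chi$ is unsatisfiable. The idea is to make the models of $\varphi$ and $\varphi'$ ``bisimilar up to arbitrarily large depth'' precisely when $\chi$ has a model, so that by the equivalence \ref{it:separability}$\Leftrightarrow$\ref{it:bound bisimilar} no finite-depth modal formula can separate them; conversely, if $\chi$ is unsatisfiable one of $\varphi,\varphi'$ becomes unsatisfiable and $\bot$ or $\top$ separates. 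A concrete construction: let $\varphi$ assert that the root has two children, one an ``$a$-marked'' witness subtree and one a ``$b$-marked'' subtree, with the $a$-subtree being an arbitrary model and the $b$-subtree being a model of $\chi$; let $\varphi'$ be the symmetric statement with the roles of the markers swapped (or with $\chi$ moved to the other child). Then a model of $\varphi$ and a model of $\varphi'$ can be made identical up to any depth in the ``$\chi$-free'' branch, while the $\chi$-branch forces satisfiability of $\chi$; tuning which branch carries $\chi$ on the two sides makes a separator impossible exactly when $\chi$ is satisfiable. I would work out the markers so that the relevant trees are genuinely $n$-bisimilar for every $n$ (using, e.g., an infinite path on which nothing distinguishing happens), and check the two directions of the reduction. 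Since the text already asserts the lower bound is ``essentially inherited from satisfiability,'' I expect a short such reduction to suffice.

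\textbf{Main obstacle.} The delicate point is the lower bound reduction: I must ensure that when $\chi$ is satisfiable the two formulae really have no modal separator of \emph{any} depth, which via Proposition~\ref{prop:separability reformulation}\ref{it:bound bisimilar} means producing, for each $n$, a pair of $n$-bisimilar models one satisfying $\varphi$ and the other $\varphi'$; getting the bisimulation to ignore the ``$\chi$-witnessing'' part of the structure while keeping $\varphi,\varphi'$ mutually exclusive requires a careful gadget. On the upper-bound side the only real content beyond bookkeeping is the exponential depth bound $n_0$, which I am allowed to assume from the surrounding development; everything else is a routine automata nonemptiness computation whose cost I would tally to confirm it stays singly exponential.
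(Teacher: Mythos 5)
Your upper bound follows essentially the same route as the paper: reduce via Proposition~\ref{prop:separability reformulation} to condition~\ref{it:bound id + bounded}, bound the relevant depth $n$ exponentially, and decide the resulting condition with automata built from Theorem~\ref{thm:munpta}. Two remarks. First, the paper does not treat the exponential depth bound as an external input: it builds a single automaton $\B$ over \emph{finite prefixes} of the full $k$-ary tree (a product of $\A$ and $\A'$ that guesses a common coloring of the prefix and two runs extendable below the frontier), and a pumping argument on $\B$ simultaneously yields Proposition~\ref{prop:bound on nesting} and reduces the decision problem to a bounded-tallness check on $\B$; your two-phase plan is a legitimate repackaging, but be aware that the bound you are "allowed to assume" is proved by exactly the construction you then rebuild. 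Second, your phrase "below level $n_0$ lets the two runs proceed on independent continuations" cannot be implemented literally by one automaton running on one tree, since the two continuations live on different trees; what is needed at the frontier is a precomputed per-pair check that each component state can be extended to an accepting run on \emph{some} continuation (the paper's "consistent pairs"). This is a standard fix and does not affect the \ExpTime\ count.

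The lower bound, however, has a genuine gap. The marker-swapping gadget you propose ($\varphi$ puts $\chi$ under the $b$-child, $\varphi'$ puts it under the $a$-child) does not produce mutually exclusive formulae: when $\chi$ is satisfiable, a single model can satisfy both, so inseparability holds for the degenerate reason of joint satisfiability. That only establishes hardness of the promise-free problem (for which already "$\chi$ vs.\ $\chi$" works trivially), not of the problem as stated in the abstract, where the inputs are promised mutually exclusive, and not of definability, which the paper reduces to. The missing ingredient is a \emph{mutually exclusive but inseparable core}, and the canonical one is $\varphi_\infty=\nu x.\Diamond x$ versus $\neg\varphi_\infty$ from Example~\ref{ex:finitetrees}: for every $n$, an infinite path and a path of length $n+1$ are $n$-bisimilar, yet the two formulae are contradictory. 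The reduction should attach a relativized copy of $\chi$ in one branch and the $\varphi_\infty$-gadget in another (e.g.\ $\varphi=\Diamond(a\wedge\chi)\wedge\Diamond b\wedge\Box(b\rightarrow\varphi_\infty)$ together with $\varphi'=\neg\varphi$), so that when $\chi$ is unsatisfiable $\varphi\equiv\bot$ is trivially definable, and when $\chi$ is satisfiable one obtains, for every $n$, $n$-bisimilar models of $\varphi$ and $\neg\varphi$ by reusing the same $\chi$-witness and varying only the length of the $b$-path; Proposition~\ref{prop:separability reformulation}\ref{it:bound bisimilar} then rules out any separator. Your closing paragraph correctly identifies this as the delicate point, but the gadget you commit to does not resolve it.
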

\ExpTime-hardness already holds for $\ML$-definability and is proved by an immediate
reduction from $\muML$-satisfiability, which is \ExpTime-complete
already for its fragment $\PDL$~\cite[Section
4]{DBLP:journals/jcss/FischerL79}.
It is not hard to modify the original hardness proof
for $\PDL$-satisfiability to work over finite trees, so
Theorem~\ref{thm:models-complexity} remains valid over finite trees as
well.
For the upper bound, we mostly follow the technical development
in~\cite{DBLP:conf/stacs/Otto99}. Thanks to Proposition~\ref{prop:separability reformulation} separability is equivalent to the existence of $n\in\NN$ for which condition~\ref{it:bound id + bounded} holds. This can be expressed as an $\MSO$ statement about the full $k$-ary tree, and thus decided. However, for optimal complexity and to extract bounds that we use later, in Appendix~\ref{app:deciding and bounds - all models} we apply a lower-level automata-theoretic analysis.
\\

Over words, we essentially follow the same approach. Since the tree
automata used in the proof of Theorem~\ref{thm:models-complexity} can
be replaced by word automata, the complexity drops to \PSpace (see Appendix~\ref{app:deciding and bounds - words}).
A matching lower bound can be derived as above by a
reduction from satisfiability in $\LTL$ over
words~\cite[Theorem~4.1]{DBLP:journals/jacm/SistlaC85} (which, in fact, can be rephrased
in terms of $\PDL$).
\begin{theorem}\label{thm:word-complexity}
  Modal separability of $\muML$-formulae is \PSpace-complete over
  words.
\end{theorem}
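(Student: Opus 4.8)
The plan is to establish the two bounds separately, in both cases reusing the machinery already set up for the tree case but instantiating it with word automata instead of tree automata. For the upper bound, I would first invoke Proposition~\ref{prop:separability reformulation} (restricted to the word case, where the relevant notion of ``tree'' is a word): modal separability of $\varphi,\varphi'$ holds iff there is $n\in\NN$ such that for all words $\M,\M'$ identical up to depth $n$, $\M\models\varphi$ implies $\M'\not\models\varphi'$. Over words, ``identical up to depth $n$'' just means agreeing on the first $n$ letters, so condition~\ref{it:bound id + bounded} becomes: there is no pair of words sharing a prefix of length $n$ with one satisfying $\varphi$ and the other $\varphi'$. By Theorem~\ref{thm:munpta} (word version) build NPWAs $\A_\varphi$ and $\A_{\varphi'}$ of size exponential in $|\varphi|$, $|\varphi'|$; note that over words each NPWA is really just a nondeterministic parity \emph{word} automaton whose transition function picks one successor state, so all constructions stay within \PSpace when performed on the fly. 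The key step is then to show that if such an $n$ exists at all, a sufficiently small one does — and here the exponential bound on $n$ inherited from the general development (Appendix~\ref{app:deciding and bounds - words}) applies, giving $n$ at most exponential in $k$. Deciding whether a given exponential $n$ works is then an emptiness-type check on a product automaton of exponential size, which can be carried out in \PSpace by the standard nondeterministic, on-the-fly reachability/parity-game argument (guess the shared prefix symbol by symbol, keeping only the current macro-states of the two automata plus a counter up to $n$ in binary, then verify acceptance of the two suffixes); finally apply Savitch/$\NPSpace=\PSpace$.

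For the lower bound I would reduce from $\LTL$ satisfiability over words, which is \PSpace-hard by~\cite[Theorem~4.1]{DBLP:journals/jacm/SistlaC85}. As the text already hints, $\LTL$ formulae can be rephrased as $\PDL$ (hence $\muML$) formulae over words of comparable size. The cleanest route is to go through $\ML$-\emph{definability}, which is a special case of separability: a $\muML$-formula $\chi$ (over words) is $\ML$-definable iff $\chi,\neg\chi$ are modally separable. So it suffices to give, from an $\LTL$ (equivalently $\PDL$/$\muML$) formula $\eta$, a $\muML$-formula $\chi_\eta$ over words that is $\ML$-definable iff $\eta$ is unsatisfiable: take for instance $\chi_\eta = \eta \wedge \varphi_\infty$ together with a fresh proposition or action trick so that, when $\eta$ is satisfiable, $\chi_\eta$ is satisfiable but distinguishes words only at unbounded depth and hence is not $\ML$-definable, while if $\eta$ is unsatisfiable $\chi_\eta\equiv\bot$ is trivially modal. (One must double-check the ``satisfiable $\Rightarrow$ not $\ML$-definable'' direction; padding the witness word with an arbitrarily long uniform prefix before the part that witnesses $\eta$ makes $\chi_\eta$ not $\bis^n$-invariant for any $n$.) Since the reduction is polynomial, \PSpace-hardness of modal separability over words follows.

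The main obstacle I expect is the upper-bound book-keeping: making the emptiness check for the product of the two exponential-size NPWAs together with a binary counter bounded by an exponential $n$ genuinely run in polynomial space, i.e.\ never materializing the exponential state sets but exploring them nondeterministically on the fly, and correctly handling the parity acceptance condition on the two infinite suffixes within that budget (this needs the usual ``guess a witnessing cycle / run an $\NPSpace$ parity-emptiness subroutine'' idea). The rest — transferring Proposition~\ref{prop:separability reformulation} and the exponential bound on $n$ to words, and the $\LTL$-to-$\muML$ reduction — is routine given the earlier sections. I would also remark that, by Lemma~\ref{lem:reduction-definable} with $\theta=\neg\varphi_\infty$ (resp.\ $\theta=\varphi_\infty$), the same \PSpace bound then covers finite and infinite words, so the theorem indeed holds for finite, infinite, and arbitrary words as claimed in the abstract.
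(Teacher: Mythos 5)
Your proposal is correct and follows essentially the same route as the paper: Proposition~\ref{prop:separability reformulation} together with the exponential bound on the modal depth (Proposition~\ref{prop:bound on nesting}), an on-the-fly \PSpace{} exploration of the product of the two exponentially large word automata storing only a single state of each (not determinized state sets --- which is how your phrase ``macro-states'' must be read for the space bound to hold), and a reduction from \LTL{} satisfiability for hardness. The paper's own argument is even terser --- it just observes that the tree automaton $\B$ from the \ExpTime{} proof becomes a word automaton whose bounded-length condition can be checked on the fly while remembering one state at a time --- so nothing in your sketch diverges from it in substance.
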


As announced, an important step in the proofs of the upper bounds, both in the case with arbitrary models and with words, is the following proposition which we will also use later.
\begin{proposition}\label{prop:bound on nesting}
  If $\phi,\phi'\in\muML$ are separable then they are separable by a formula of modal depth $l$ exponential in their size $k$. The same is true over words.
\end{proposition}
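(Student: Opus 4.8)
The plan is to extract from the model-theoretic characterization in Proposition~\ref{prop:separability reformulation}\ref{it:bound id + bounded} a decreasing fixpoint iteration over pairs of automaton states, whose length bounds the modal depth needed for a separator. First I would use Theorem~\ref{thm:munpta} to turn $\phi$ and $\phi'$ into NPTAs $\A=(Q,\alphabet,q_I,\delta,\rank)$ and $\A'=(Q',\alphabet,q_I',\delta',\rank')$ equivalent to them over trees of outdegree at most $k$; crucially, over bounded outdegree the numbers of states $|Q|,|Q'|$ are at most exponential in $k$. By Proposition~\ref{prop:separability reformulation}, $\phi$ and $\phi'$ have a separator of modal depth $n$ iff there is \emph{no} pair of trees $\M,\M'$ of outdegree at most $k$ with $\M\models\A$, $\M'\models\A'$ and $\M_{|_n}=\M'_{|_n}$. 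It therefore suffices to show that if separability holds for some $n$ at all, then it already holds for some $n$ exponential in $k$.

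For $d\in\NN$, let $G_d\subseteq Q\times Q'$ be the set of pairs $(q,q')$ for which there exist $k$-bounded trees $\M\models\A[q_I\mapsfrom q]$ and $\M'\models\A'[q_I'\mapsfrom q']$ with $\M_{|_d}=\M'_{|_d}$. Then $(q_I,q_I')\in G_n$ holds iff a bad pair identical up to depth $n$ exists, i.e.\ iff no separator of modal depth $n$ exists; and $G_0\supseteq G_1\supseteq\cdots$, since witnesses at level $d+1$ are also witnesses at level $d$. The key step is to show that $G_{d+1}$ depends only on $G_d$: unfolding one layer of two accepting runs at the common root yields that $(q,q')\in G_{d+1}$ iff there are a color $c$, transitions $S\in\delta(q,c)$ and $S'\in\delta'(q',c)$, a number $m$ with $\max(|S|,|S'|)\le m\le k$, and surjections $i\mapsto p_i$ onto $S$ and $i\mapsto p_i'$ onto $S'$ (for $i\le m$) such that $(p_i,p_i')\in G_d$ for all $i$; conversely, gluing witnessing subtrees under a fresh common root colored $c$ gives witnesses at level $d+1$, using that every infinite path of the glued tree descends into a child and that the parity condition is blind to the discarded finite run prefix.

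Thus $G_{d+1}=F(G_d)$ for a fixed monotone $F$ on subsets of $Q\times Q'$. Being decreasing, the chain becomes constant as soon as two consecutive members agree, hence reaches its limit $\bigcap_d G_d$ within $|Q|\cdot|Q'|$ steps. Consequently $\phi$ and $\phi'$ are separable iff $(q_I,q_I')\notin\bigcap_d G_d$, in which case already $(q_I,q_I')\notin G_l$ for $l=|Q|\cdot|Q'|$; by Proposition~\ref{prop:separability reformulation} this $l$ is a separating modal depth, and it is exponential in $k$ by the bound on $|Q|,|Q'|$. Over words the argument is the same with NPWAs replacing NPTAs, the outdegree being $\le 1$ so that the choice of $m$ becomes trivial, and $|Q|\cdot|Q'|$ is again exponential in $k$.

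The point I expect to be most delicate is the ``$G_{d+1}$ depends only on $G_d$'' equivalence: it needs a careful alignment of the NPTA transition discipline --- the children of a node carrying \emph{exactly} the chosen set of states, with their number bounded by $k$ --- with the notion of two trees being \emph{identical} (not just bisimilar) up to a fixed depth, together with the routine but necessary check that truncating and gluing runs never breaks the parity acceptance condition.
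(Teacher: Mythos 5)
Your argument is correct and arrives at essentially the same bound ($|Q|\cdot|Q'|$ over the product of the two automata's state spaces, hence exponential in $k$), but by a genuinely different mechanism than the paper. The paper phrases everything over the full $k$-ary tree: it introduces the language $\sepLang$ of finite prefixes sufficient for separation, builds a product automaton $\B$ with state set $Q\times Q'$ recognizing the complement, and derives the depth bound by \emph{pumping}: a run of $\B$ on the prefix of depth $|Q|\cdot|Q'|+1$ must repeat a state on every root-to-leaf path and can therefore be pumped to witnesses of arbitrary tallness, so $\sepLang$ is empty. You instead run a \emph{decreasing fixpoint iteration}: the sets $G_d\subseteq Q\times Q'$ of state pairs admitting accepted trees with identical $d$-prefixes satisfy $G_{d+1}=F(G_d)$ for a fixed monotone operator, form a decreasing chain, and hence stabilize within $|Q|\cdot|Q'|$ steps. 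Both proofs rest on the finiteness of $Q\times Q'$, but yours is more self-contained: it needs neither $\B$, nor the closure properties of $\sepLang$, nor the pumping of tree-shaped runs, which is the slightly delicate part of the paper's version. Conversely, the paper's construction of $\B$ does double duty for the \ExpTime{} decision procedure, something your iteration could also provide but which is not needed for this proposition. The step you flag as delicate, the recurrence $G_{d+1}=F(G_d)$, does go through: identity of $(d{+}1)$-prefixes forces equal root colors and a matching set of at most $k$ children, restricting an accepting parity run to a subtree keeps it accepting because the parity condition ignores finite prefixes, and in the gluing direction ``identical'' is to be read up to isomorphism of prefixes, which is harmless.
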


In the remainder of the paper we will deal with computing separators
based on Proposition~\ref{prop:bound on nesting}. Before we proceed let note that our approach differs from the treatment of modal definability from~\cite{DBLP:conf/csl/LehtinenQ15}. There, the authors rewrite given $\phi$ into modal $\psi$ in such a way that if the initial $\phi$ is modally definable then $\phi$ and $\psi$ are equivalent. In the case when $\phi$ is not modally definable, however, the output $\psi$ is rather random. For example, $\psi$ obtained from the formula $\phi_\infty$ from Example~\ref{ex:finitetrees} is equivalent to $\bot$ which is not even a consequence of $\phi_\infty$. Thus, a different construction is needed to obtain separators. We will actually compute something slightly stronger that might be of independent
interest. 
%
%
%
%
\begin{definition}\label{def:n-uniform consequence}
  Given $\phi\in\muML$ and $n\in\NN$, a formula $\psi\in\ML^n$ is an
  \emph{$n$-uniform consequence} of $\phi$ if, for all $\theta\in
  \ML^n$:
  \[ \phi\models\theta \hspace*{0.5cm} \iff \hspace*{0.5cm}
  \psi\models\theta \]
  An analogous notion relative to a fixed class $\C$ of models is
  obtained by replacing $\models$ with $\models_\C$.
\end{definition}
In words: $\psi$ is an $n$-uniform consequence of $\phi$ if it has
modal depth $n$, is a consequence of $\phi$, and entails every other
consequence of $\phi$ of modal depth $n$. In particular, if $\phi$ and
$\phi'$ are separable by \emph{some} modal formula of modal depth $n$
and $\psi$ is an $n$-uniform consequence of $\phi$, then this $\psi$
separates $\phi$ from $\phi'$ as well. Observe that
$n$-uniform consequences exist for every $\varphi\in\muML$ and
$n\in\NN$. Indeed, given $\phi$ and $n$ we can obtain an
$n$-uniform consequence $\psi$ of $\varphi$ by taking the disjunction of all
$\ML^n$-types consistent with $\phi$. Here, by an \emph{$\ML^n$-type}
we mean a maximal consistent subset of $\ML^n$. Since up to
equivalence there are only finitely many formulae in $\ML^n$, each
$\ML^n$-type can be represented as a single $\ML^n$-formula and the
mentioned disjunction $\psi$ is well-defined.

In view of Proposition~\ref{prop:bound on nesting}, it thus suffices to compute $n$-uniform consequences of
$\varphi$. Unfortunately, the naive construction given above is
nonelementary in the size of the separated formulae $\phi$ and
$\phi'$. In the next sections we give better constructions.

\section{Optimal Separators: Arbitrary Models}\label{sec:optimal separators}

 We construct doubly exponentially sized separators and provide matching lower bounds.

\subsection{Construction}

\begin{theorem}\label{thm:construction}
If $\phi$ and $\phi'$ are modally separable then a separator $\phi$ of size doubly exponential in $k=|\phi|+|\phi'|$ exists and can be computed in doubly exponential time.
\end{theorem}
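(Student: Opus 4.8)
The plan is to turn the bound from Proposition~\ref{prop:bound on nesting} into an effective construction by exhibiting, for the exponential modal depth $n$ it provides, an $n$-uniform consequence $\psi$ of $\phi$ of doubly exponential size, computable in doubly exponential time; by the remark after Definition~\ref{def:n-uniform consequence} any such $\psi$ is a separator of $\phi,\phi'$. First I would fix $n = l$, the exponential bound on the modal depth of a separator, which depends only on $k=|\phi|+|\phi'|$ and can itself be computed from $\phi,\phi'$. Next, I would invoke Theorem~\ref{thm:munpta} to pass from $\phi$ to an equivalent NPTA $\Amc$ with a set $Q$ of states of size exponential in $k$ (doubly exponential computation time is already allowed here, matching the statement). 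The core of the argument is then the observation that what an $\ML^n$-formula can ``see'' of a model is exactly the $n$-bisimulation type of its root, so what is needed is a finite representation, computable from $\Amc$, of the set of $n$-bisimulation types realizable at roots of models accepted by $\Amc$ — equivalently, the set of $n$-unravelled $k$-branching trees accepted by $\Amc$, by Proposition~\ref{prop:separability reformulation}\ref{it:bound id + bounded}.

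Concretely, I would build, level by level from depth $n$ down to the root, a collection of ``macrostates'': at depth $i$ a macrostate is a set $S\subseteq Q$ that can arise as the labels assigned by \emph{some} accepting run of $\Amc$ to \emph{some} node at depth $i$ of \emph{some} accepted tree, but — crucially, since acceptance of a parity automaton is not a local property — one must track enough information to guarantee that the partial run extends to a full accepting run both above and below. Above the node this is handled by a standard subset/powerset bookkeeping over $Q$; below the node one needs to know which sets $S$ are ``good'', i.e. admit an accepting run on some tree from state-set $S$, which is decidable by emptiness of $\Amc$ with varying initial configuration. With the good macrostates in hand I would define the candidate formula bottom-up over the $n$-prefix: to each good macrostate $S$ at depth $n$ assign the disjunction $\chi_S^0$ of the colors $c$ for which the local transition structure is consistent, and to each good macrostate $S$ at depth $i<n$ assign $\chi_S^{n-i} = \bigvee_c \bigl( c \wedge \bigvee \{\, \nabla\{\chi_{S_1}^{n-i-1},\dots,\chi_{S_m}^{n-i-1}\} \,\bigr)$ ranging over the color $c$ and over the ways a transition $\{S_1,\dots,S_m\}\in\delta(\cdot,c)$ distributes the obligations of $S$ consistently onto good child-macrostates; the output $\psi$ is the disjunction of $\chi_S^n$ over all good macrostates $S$ at the root containing $q_I$. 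One then checks that $\psi\in\ML^n$, that $\phi\models\psi$ (every model of $\phi$ unravels to a $k$-branching tree whose run projects to this family of macrostates), and that $\psi\models\theta$ for every $\theta\in\ML^n$ with $\phi\models\theta$ (if $\psi\not\models\theta$ then some $n$-prefix witnessing a branch of the disjunction is, by goodness, extendable to a model of $\phi$ refuting $\theta$) — this is exactly the $n$-uniform consequence property, and it is where Proposition~\ref{prop:separability reformulation} and the bound $l$ are used.

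For the size and time bounds: there are at most $2^{|Q|}$ macrostates, hence doubly exponentially many in $k$, so the $\chi_S^{i}$ are indexed by doubly exponentially many symbols; each is a disjunction over colors (at most $2^{|\Propositions|}$) and over transition-distributions (at most $2^{|Q|}\cdot 2^{|Q|}$ many), each a $\nabla$ of previously constructed formulae, so one level of the recursion blows up the description by at most a doubly exponential factor per level and there are only $n$ (exponentially many) levels — organizing the output as a DAG with shared subformulae keeps the total size doubly exponential and the construction runs in doubly exponential time. (If one insists on a syntax tree rather than a DAG, a little more care, or the observation that only $2^{|Q|}$ distinct formulae ever appear, still gives a doubly exponential bound.) The main obstacle I anticipate is the correct treatment of the parity acceptance condition in the macrostate construction — a naive powerset construction tracks reachable states but not acceptability, so the ``goodness'' predicate and the argument that good partial runs glue to accepting runs need the standard but somewhat delicate tools (emptiness testing of $\Amc[q_I\mapsfrom q]$-like automata, and a product with a deterministic parity automaton for the parity condition to make the level-by-level bookkeeping sound); everything else is routine bookkeeping on top of Theorem~\ref{thm:munpta} and Proposition~\ref{prop:separability reformulation}.
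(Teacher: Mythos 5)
Your overall skeleton is exactly the paper's: use Proposition~\ref{prop:bound on nesting} to fix an exponential modal depth $l$, observe (as after Definition~\ref{def:n-uniform consequence}) that any $l$-uniform consequence of $\phi$ is then a separator, build that consequence from the NPTA of Theorem~\ref{thm:munpta}, and do the size arithmetic (branching doubly exponential, depth exponential, so $m^d$ is still doubly exponential --- your DAG hedge is unnecessary, as the paper's Lemma~\ref{lem:construction} notes the plain syntax tree already fits). The gap is in the core construction. Your macrostates are sets $S\subseteq Q$, but in a run of an NPTA each node carries a \emph{single} state, and the semantic target is existential: ``$\M$ is $n$-bisimilar to \emph{some} model accepted from state $q$.'' No powerset layer is needed, and as described yours is under-specified --- it is never pinned down what $\chi_S^{j}$ is supposed to mean semantically for a non-singleton $S$, nor how ``distributing the obligations of $S$'' interacts with the existential claim you ultimately need, so one cannot check that the resulting $\psi$ entails exactly the $\ML^n$-consequences of $\phi$ rather than something stronger or weaker.

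More seriously, the fix you propose for what you correctly identify as the main obstacle --- ``a product with a deterministic parity automaton'' to make the level-by-level bookkeeping sound --- is not available here: nondeterministic parity \emph{tree} automata cannot be determinized, and a root-down subset construction does not capture their languages. The paper avoids the issue entirely. It defines $\psi_{n,q}$ indexed by a single state, by induction on $n$, with the invariant that $\M\models\psi_{n,q}$ iff $\M\bis^n\N$ for some $\N\models\A[q_I\mapsfrom q]$. Extendability below depth $n$ is absorbed into the base case ($\psi_{0,q}$ is the disjunction of root colors of nonempty $\A[q_I\mapsfrom q]$) and propagated by the induction; the parity condition never has to be tracked through the prefix because it is a tail condition --- in the correctness proof one glues the witnessing models $\N_p$ for $p\in S\in\delta(q,c)$ under a fresh root labelled $q$, and every infinite path eventually lives in some $\N_p$ whose run is already accepting. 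Replacing your macrostate machinery with this single-state invariant is not a cosmetic change; it is the step that makes the construction both correct and verifiable.
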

The above is a consequence of the following lemma.

\begin{lemma}\label{lem:construction}
For every $\phi\in\muML$ and $n\in\NN$, one can construct an $n$-uniform consequence $\psi_n\in\ML^n$ of $\phi$ with branching doubly exponential in $|\phi|$ and depth linear in $n$.
\end{lemma}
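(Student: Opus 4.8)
The plan is to convert $\phi$ to an equivalent NPTA $\Amc$ via Theorem~\ref{thm:munpta}, whose state set $Q$ has size exponential in $|\phi|$, and then to read off, from the "reachable macro-states" of $\Amc$, a sequence of $\ML$-formulae describing exactly which $\ML^n$-properties survive on models of $\phi$. Concretely, a \emph{macro-state} is a subset $S\subseteq Q$, and the key point is that the set of macro-states reachable from $\{q_I\}$ by the (determinised) transition structure of $\Amc$ is doubly exponential in $|\phi|$ only in the worst case, but the formula we build will branch over macro-states, so branching doubly exponential in $|\phi|$ is exactly what we expect. For each macro-state $S$ and each $m\le n$ I would define a formula $\psi_S^m\in\ML^m$ by downward induction on $m$: $\psi_S^0$ is the disjunction over colors $c$ such that some "accepting-from-now-on" behaviour of $\Amc$ restricted to states in $S$ is consistent with color $c$; and $\psi_S^{m+1}$ is a disjunction, over all colors $c$ and all choices of transitions $(T_q)_{q\in S}$ with $T_q\in\delta(q,c)$ compatible in the sense needed for a joint run, of $c\wedge\nabla\{\psi_{S'}^{m}\mid S' \text{ a "child macro-state" induced by the choice}\}$. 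The desired $\psi_n$ is then $\psi_{\{q_I\}}^n$.

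The heart of the argument is proving that $\psi_n$ so defined is an $n$-uniform consequence of $\phi$, i.e.\ that for every $\theta\in\ML^n$ we have $\phi\models\theta$ iff $\psi_n\models\theta$. The direction $\phi\models\psi_n$ (hence $\phi\models\theta\Rightarrow\psi_n\models\theta$ is not what we want—rather we want $\psi_n\models\theta\Rightarrow\phi\models\theta$, which follows from $\phi\models\psi_n$) is established by showing that any tree model $\Mmc\models\phi$, with accepting run $\rho$ of $\Amc$, satisfies $\psi_{\{q_I\}}^n$: one tracks along $\rho$ the macro-states actually occupied, peeling off one modal layer at a time and checking the disjuncts are met. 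For the converse, $\psi_n\models\theta\Leftarrow\phi\models\theta$, the standard trick is to show that every model of $\psi_n$ is $n$-bisimilar to (a model bisimilar to) some model of $\phi$: given $\Mmc\models\psi_{\{q_I\}}^n$, the syntactic choices witnessing the nested disjuncts let one build, on the $n$-prefix of $\Mmc$, a consistent partial run of $\Amc$ whose macro-states at depth $n$ are "non-empty" in the accepting sense, and then one extends below depth $n$ by gluing in genuine accepting subtrees of $\Amc$ (which exist because the macro-states are chosen from the reachable-and-accepting ones). Since $\Amc$ is equivalent to $\phi$, the resulting tree satisfies $\phi$ and is $n$-bisimilar to $\Mmc$; as $\theta\in\ML^n$ is $n$-bisimulation invariant, $\Mmc\models\theta$. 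This uses exactly Proposition~\ref{prop:separability reformulation}-style reasoning about $n$-bisimilarity.

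I expect the main obstacle to be getting the macro-state transition relation right so that the two build-a-model arguments go through simultaneously: the notion of "child macro-state induced by a choice of transitions $(T_q)_{q\in S}$" must be strong enough that a model of $\psi_n$ yields a legitimate partial run (this needs the $\nabla$ to correctly encode both the "every $p$ covered" and "every child labelled" halves of a transition $S_i\in\delta(q,c)$, summed over $q\in S$), yet permissive enough that every model of $\phi$ is captured (this needs the disjunction to range over \emph{all} consistent choices, not a determinised one). In particular one must be careful that a macro-state is only admitted as a disjunct if it is both reachable and "live", i.e.\ admits an accepting run of $\Amc$ from all its states on some tree—equivalently, passes a parity-game nonemptiness check—so that the below-depth-$n$ extension exists. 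Verifying liveness is where the parity condition of $\Amc$ enters; checking it amounts to solving a parity game on the macro-state arena, which is doubly exponential in size and thus fits the doubly-exponential time bound.

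Finally, the size and complexity bookkeeping: there are at most $2^{|Q|}$, hence doubly exponentially many, macro-states, so each $\psi_S^{m+1}$ has branching bounded by (number of colors) $\times$ (number of joint transition choices) $\times$ (number of child macro-states), all at most doubly exponential in $|\phi|$; the recursion depth is $n$, giving depth linear in $n$ as claimed; and the whole object, together with the liveness pre-computation, is produced in doubly exponential time. Plugging $n=l$ from Proposition~\ref{prop:bound on nesting} into Lemma~\ref{lem:construction} and invoking the observation (made just before Definition~\ref{def:n-uniform consequence}) that an $n$-uniform consequence of $\phi$ separates $\phi$ from any $\phi'$ that is $\ML^n$-separable from it, yields Theorem~\ref{thm:construction}.
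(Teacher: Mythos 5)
Your overall skeleton is the paper's: compile $\phi$ into an NPTA $\Amc$ with exponentially many states (Theorem~\ref{thm:munpta}), build formulae of modal depth $m$ by induction, with disjuncts ranging over colors and transitions and children handled by $\nabla$, prove a semantic invariant of the form ``$\M$ satisfies the formula iff some tree accepted by the automaton is $m$-bisimilar to $\M$'', and derive the $n$-uniform-consequence property from bisimulation invariance of $\ML^n$. The two directions of that invariant are argued exactly as you sketch (reading a run off an accepted tree; conversely assembling an accepted tree from the witnesses supplied by the nested disjuncts), and your size bookkeeping and the derivation of Theorem~\ref{thm:construction} are fine.

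The one genuine weak point is the macro-state (powerset) layer, which is precisely the part you flag as ``the main obstacle'' and leave unresolved --- and it is also unnecessary. A run of an NPTA assigns a \emph{single} state to each node, so after the root chooses $c$ and $S\in\delta(q,c)$, each child only needs to be $m$-bisimilar to a tree accepted from \emph{one} state $p\in S$; the paper therefore indexes the formulae by single states, setting $\psi_{m+1,q}=\bigvee_{c}\bigvee_{S\in\delta(q,c)}c\wedge\nabla\{\psi_{m,p}\mid p\in S\}$, and the two covering conditions of a transition are exactly the two halves of $\nabla$. Your $\psi^m_S$ for $|S|\ge 2$ would have to mean ``some tree $m$-bisimilar to $\M$ admits accepting runs from \emph{all} states of $S$ simultaneously''; the inductive step then needs a notion of joint run in which the child macro-states are not determined by the tuple $(T_q)_{q\in S}$ alone but by how the $|S|$ runs distribute their states over the children, and quantifying over those distributions threatens a triply exponential disjunction (over \emph{sets} of macro-states) rather than the doubly exponential one you claim. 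None of this is needed: starting from $\{q_I\}$ and taking all child macro-states to be singletons collapses your construction to the paper's, and with that reading your argument goes through. Finally, the separate ``liveness'' precomputation you describe is subsumed by the base case: $\psi_{0,q}$ ranges only over colors $c$ for which $\Amc[q_I\mapsfrom q]$ accepts some tree with root color $c$ (this is where nonemptiness, hence the parity condition, enters), and any disjunct mentioning a dead state is then automatically unsatisfiable because $\nabla$ demands a $\Diamond$-witness for it.
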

We show how Theorem~\ref{thm:construction} follows from
Lemma~\ref{lem:construction}. Proposition~\ref{prop:bound on nesting}
guarantees that if a modal separator for $\phi$ and $\phi'$ exists
then there is one with modal depth $l$ exponential in $k$. Since
$\psi_l$ entails this separator it follows that $\psi_l$ is a
separator itself.

The branching $m$ of $\psi_l$ is at most doubly exponential in $|\phi|$ and thus also in $k$: $m\leq 2^{2^{k^x}}$ for some constant $x$. The depth $d$ of $\psi_l$ is linear in $l$ and therefore $d\leq 2^{k^y}$ for some $y$. Altogether this means that the size of $\psi_l$:
\[
  |\psi_l| \leq m^d \leq (2^{2^{k^x}})^{2^{k^y}}
\]
is at most doubly exponential in $k$. It remains to prove Lemma~\ref{lem:construction}.

\begin{proof}
Let $\A=(Q,\Sigma,q_I,\delta,\rank)$ be the NPTA equivalent to $\phi$
with exponentially many states, which exists due to Theorem~\ref{thm:munpta}. 
For each $n\in\NN$ and $q\in Q$ we construct $\psi_{n,q}\in\ML^n$ of branching $2^{2^{|Q|}}$ such that:
\begin{align}
  \M\models\psi_{n,q} \text{\hspace*{0.5cm} $\iff$ \hspace*{0.5cm} there exists $\N\models\A[q_I\mapsfrom q]$ with $\M\bis^n\N$}\label{eq:universal consequence}
\end{align}
for every structure $\M$. Then, $\psi_{n,q_I}$ is our desired $n$-uniform consequence $\psi_n$ of $\phi$.

We proceed by induction on $n\in\NN$. For the base case we put:
\[
  \psi_{0,q} = \bigvee\{c\in\alphabet\ |\ \text{there is $\N\models\A[q_I\mapsfrom q]$ with $\N\models c$}\}
\]
which clearly satisfies the induction goal~\eqref{eq:universal consequence}. For the induction step define:
\[
  \psi_{n+1,q} = \bigvee_{c\in\alphabet} \bigvee_{S\in\delta(q,c)}c \wedge \nabla \{\psi_{n,p}\ |\ p\in S\}.
\]
Fix $\M$, denote the color of the root by $c$ and the set of all children of the root by $M_0$.
If $\M$ satisfies $\psi_{n+1,q}$ then there is $\{p_1,...,p_l\}=S\in\delta(q,c)$ such that nabla of $\Phi=\{\psi_{n,p}\ |\ p\in S\}$ is satisfied in the root. By invariance under bisimulation we may assume that the root of $\M$ has sufficiently many children to find a separate witness for each $\psi_{n,p}\in\Phi$. That is, we assume a surjective assignment $h:M_0\to\Phi$ that maps every $\point\in M_0$ to some formula $\psi_{n,p}$ true in $\point$. By induction hypothesis, for each $\point\in M_0$ with $h(\point)=\psi_{n,p}$ there is a model $\N_p\models\A[q_I\mapsfrom p]$ $n$-bisimilar to the subtree of $\M$ rooted in $\point$. Define $\N$ as follows: first take the disjoint union $\{\point\}\sqcup\bigsqcup\{\N_p\ |\ p\in S\}$ of all the $\N_p$'s and a fresh point $v$ of color $c$; then for every $\N_p$ add an edge from $\point$ to the root of $\N_p$ and set $\point$ as the new root. It is easy to see that $\N\models\A[q_I\mapsfrom q]$ and $\M\bis^{n+1}\N$, as desired.

Conversely, assume $\M\bis^{n+1}\N$ and $\N\models\A[q_I\mapsfrom q]$ witnessed by an $(n+1)$-bisimulation $Z$ and a run $\rho:N\to Q$. Denote the children of the root of $\N$ by $N_0$. 
Since $\rho$ is a run, the set $S=\rho[N_0]$ of states assigned to $N_0$ belongs to $\delta(q,c)$. Every $\point\in M_0$ is $n$-bisimilar to some $\altpoint\in N_0$ and hence by the induction hypothesis satisfies $\psi_{n,p}$ for $p=\rho(\altpoint)\in S$. Symmetrically, for every $p\in S$ there is $\altpoint\in N_0$ accepted by $\A[q_I\mapsfrom p]$. Since that $\altpoint$ is $n$-bisimilar to some $\point\in M_0$, by induction hypothesis $\point$ satisfies $\psi_{n,p}$. It follows that the root of $\M$ satisfies $\nabla\{\psi_{n,p}\ |\ p\in S\}$ and therefore also $\psi_{n+1,q}$.
\end{proof}

Let us remark that Lemma~\ref{lem:construction} can be easily adapted to deal with vocabulary restrictions. That is, given $P\subset\Propositions$ we could construct $\psi_n^{P}$ similar to $\psi_n$ but only using atomic propositions from $P$ and only entailing $\ML^n$-consequences of $\phi$ whose vocabulary is contained in $P$. To that end, it suffices to project-out atomic propositions not in $P$ from the automaton $\A$ and only then proceed with our construction. 
The automaton $\A^P$ obtained by such a projection accepts a model $\M$ iff $\M$ is $P$-bisimilar to some $\N\models\A$. It follows that for every $\theta\in\ML^{n,P}$:
\begin{align*}
  \psi_n^P\models \theta \hspace*{0.5cm} \iff \hspace*{0.5cm} \A^P\models \theta \hspace*{0.5cm} \iff \hspace*{0.5cm} \A\models\theta
\end{align*}
where the first equivalence uses $\theta\in\ML^n$ and the latter $\theta\in\ML^P$.
Such \emph{$(P,n)$-uniform consequence} $\psi_n^{P}$ of $\phi$ can then be taken as a Craig modal separator, in the same way as $\psi_n$ serves as a modal separator.

\subsection{Lower bounds}
For the lower bounds, we show that over arbitrary structures (in fact,
already binary trees) $\muML$ is doubly exponentially more succinct
than $\ML$. The example is essentially taken from \cite[Section
3.1]{French13}. There the authors use game-theoretic tools which are
later applied to more complicated cases. Since we are only interested
in this example, we provide a straightforward self-contained
argument.
\begin{proposition}\label{prop:succinctnes all models}
There is a sequence $(\phi_n)_{n\in\NN}$ of $\muML$-formulae of size
polynomial in $n$ such that each $\phi_n$ is equivalent to a $\ML$-formula but every $\psi\in\ML$ equivalent to $\phi_n$ has size at least $2^{2^n}$.
\end{proposition}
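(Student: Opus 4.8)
The plan is to exhibit the sequence $(\phi_n)$ explicitly and then split the argument into an upper bound (each $\phi_n$ is $\ML$-definable) and a lower bound (any equivalent $\ML$-formula is huge). Following \cite{French13}, I would take $\phi_n$ to be a small $\muML$-formula that forces, along every branch of a binary tree, a ``binary counter'' of $n$ bits to count from $0$ up to $2^n-1$ before something happens; the point is that expressing ``the counter reaches its maximal value'' requires talking about a path of length $2^n$, but the property it describes is in fact equivalent to a finite modal formula (of modal depth roughly $2^n$) because after $2^n$ steps the behaviour is determined. Concretely, with two propositions marking the ``bit'' value and a propositional encoding of the bit position, $\phi_n$ of size $O(n)$ (or $O(n^2)$) can say ``the initial configuration encodes $0$, successive configurations along each branch increment the counter correctly, and the configuration encoding $2^n-1$ is reachable'' — and one checks that this is equivalent to the $\ML$-formula asserting the same thing spelled out up to depth $2^n$, so $\phi_n\equiv\psi_n^\ast$ for some $\psi_n^\ast\in\ML$.

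For the lower bound I would use a model-distinguishing (Ehrenfeucht–Fraïssé / bisimulation) argument: construct two families of pointed trees $\Mmc_{n}$ and $\Mmc_{n}'$ such that $\Mmc_n\models\phi_n$, $\Mmc_n'\not\models\phi_n$, yet $\Mmc_n$ and $\Mmc_n'$ agree on all $\ML$-formulae of size below $2^{2^n}$. The idea is that a modal formula of size $s$ can ``inspect'' the tree only through a syntax tree of $s$ nodes, and to witness a discrepancy that only shows up after following a specific branch of length $2^n$ in a binary tree, a formula must either (i) have modal depth $\geq 2^n$ along that branch while also branching enough to pin down which of the $\leq 2^{2^n}$ relevant branches is the witnessing one, or (ii) be unable to tell $\Mmc_n$ from $\Mmc_n'$. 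Making this precise, I would design $\Mmc_n'$ as a tree in which every branch ``almost'' satisfies the counter discipline but deviates at one cleverly placed position, arranged so that there are $2^{2^n}$-many ``candidate'' deviation patterns and any modal formula smaller than $2^{2^n}$ is insensitive to at least one of them; a counting/pigeonhole argument then yields a pair in the same $\ML_{<2^{2^n}}$-theory.

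The main obstacle is the lower bound, specifically quantifying succinctness rather than just modal depth. The modal-depth lower bound ($\Omega(2^n)$) is the easy half and follows from a standard bisimulation argument; the jump to a size lower bound of $2^{2^n}$ requires controlling how much a formula of bounded \emph{size} (not just depth) can express on trees of branching $2$, which is subtler because a small formula can have large modal depth along a single path. I expect to handle this by a careful choice of the distinguishing trees so that distinguishing them genuinely needs both large depth and large branching in the syntax tree simultaneously — e.g.\ encoding $2^n$ independent bits so that the distinguishing information is ``spread out'' over $2^{2^n}$ incomparable positions — and then invoking that a syntax tree with fewer than $2^{2^n}$ nodes cannot separately address that many positions. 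Once the two trees are built, verifying $\Mmc_n\models\phi_n$, $\Mmc_n'\not\models\phi_n$, and the indistinguishability claim is routine but tedious bookkeeping, which I would relegate to the appendix.
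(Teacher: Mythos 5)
Your upper bound (a small $\muML$-formula enforcing an $n$-bit counter along every path, hence bounding all path lengths by $2^n$ and making the property $\ML$-definable) matches the paper's construction. The gap is in the lower bound, which is the substance of the proposition: you correctly identify that controlling formula \emph{size} rather than modal depth is the real difficulty, but your proposed resolution is only a promise. Two concrete problems. First, as initially stated your plan asks for a single pair $\Mmc_n,\Mmc_n'$ agreeing on \emph{all} $\ML$-formulae of size below $2^{2^n}$; that is much stronger than needed (the right quantifier order is: for every small $\psi$ there is a $\psi$-dependent distinguishing pair), and establishing such uniform indistinguishability would require formula-size games in the style of the cited work of French et al., precisely the machinery the paper deliberately avoids. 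Second, and more importantly, you never supply the mechanism that forces $2^{2^n}$ \emph{incomparable} positions in the syntax tree; ``encoding $2^n$ independent bits so that the information is spread out'' is the desired conclusion, not an argument, and your pigeonhole step is exactly the point where the proof has to happen.

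The paper's key idea, absent from your proposal, is to work with \emph{two distinct actions} $\letterA,\letterB$ rather than two propositional colours. Then for each of the $2^{2^n}$ action-words $w\in\{\letterA,\letterB\}^{2^n}$, any correct $\psi\in\ML$ must contain a descending chain of $2^n$ subformulae whose $i$-th member begins with $\diamond{w_i}$ or $\boxmodal{w_i}$ --- otherwise $\psi$ is blind to points reachable only along the $w$-labelled path, and one exhibits two models differing only there, one satisfying $\phi_n$ and one not. Chains for distinct words must branch apart in the syntax tree at the first position where the words differ, so the full binary tree of height $2^n$ embeds into the syntax tree of $\psi$, giving $|\psi|\geq 2^{2^n}$. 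Note that the paper explicitly flags the presence of two actions as \emph{essential} and leaves the monomodal version (which is what your proposition-based encoding suggests) as a conjecture; so without this ingredient your plan is not merely incomplete but aimed at a setting where the bound is not known to hold.
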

\begin{proof}
We only give a sketch, the details are found in
Appendix~\ref{app:lower bounds}. We assume two different actions $\letterA$ and $\letterB$. For each $n\in\NN$ consider the property:
\begin{itemize}
  \item[$B_n$:] ``No path (over all actions) longer than $2^n$ starts in the root.''
\end{itemize}
This can be enforced by encoding an $n$-bit binary counter into the
structure of the model, and requiring that on every path the counter
values are strictly increasing. Let $C_n$ be this (technically
stronger) property expressing the behavior of the encoded counter.
Assuming that the encoding is reasonably efficient, $C_n$ can be
easily expressed by a $\muML$-formula $\phi_n$ of size polynomial in $n$ (in fact, a weak fragment of $\PDL$ is already sufficient). Since the lengths of paths are bounded, $C_n$ can be also expressed in $\ML$.

However, every $\psi\in\ML$ equivalent to
$\phi_n$ has size at least $2^{2^n}$. The reason is that for every
sequence of actions $\letterA$ and $\letterB$ of length $2^n$, the
syntax tree of $\psi$ must contain a descending sequence of subformulae of length $2^n$ such that the $i$-th subformula begins with a modal operator corresponding to the $i$-th action. This allows to embed a binary tree of height $2^n$ into the syntax tree of $\psi$.
\end{proof}
Note that the presence of two different actions $\letterA$ and
$\letterB$ is essential for the argument. We conjecture that $\muML$
is doubly exponentially more succinct than $\ML$ already in the
monomodal setting. 
Consider the following Property $P_n$, parameterized by $n\in \NN$:
\begin{itemize}
  \item[$P_n$:] ``$C_n$ and there exists a maximal path on which the number of points satisfying $\atProp$ is even.''
\end{itemize}
where $C_n$ is the same as in Proposition~\ref{prop:succinctnes all models}.
It is not difficult to come up with small, that is, of size polynomial
in $n$, $\muML$-formulae
$\varphi'_n$ expressing $P_n$. Unfortunately, proving that no small $\ML$
formula can be equivalent to $\phi_n'$ seems difficult. For instance,
consider models where every non-leaf point has a child satisfying $\atProp$ and
a child satisfying $\neg\atProp$. Then a trick similar to the famous
example of Potthoff (showing, roughly, that the
language of all binary trees of even depth is first-order
definable)~\cite[Example 1]{Potthoff95} can be exploited to get a modal formula equivalent to $\phi_n'$ (over such models), but of size only single exponential in $n$. Moreover, the results in the next Section~\ref{sec:separators - words} show that looking at words only is not sufficient either.

\section{Optimal Separators: Word Case}\label{sec:separators - words}

In this section we show that optimal modal separators (over words)
can be computed exponentially faster and are exponentially smaller
compared to the case with arbitrary models. 
\begin{theorem} \label{thm:word-construction} 
  If $\varphi$ and $\varphi'$ are modally separable over
  words, then a separator of size exponential in
  $|\varphi|+|\varphi'|$ exists and can be computed in exponential
  time.
\end{theorem}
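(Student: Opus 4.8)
The plan is to mirror the arbitrary-models construction of Lemma~\ref{lem:construction}, but using nondeterministic parity \emph{word} automata (NPWA) in place of NPTA, which costs only a single exponential blow-up by the bounded-outdegree case of Theorem~\ref{thm:munpta}. First I would establish the word analogue of Proposition~\ref{prop:bound on nesting}: if $\varphi,\varphi'$ are modally separable over words then a separator of modal depth $l$ polynomial in $k=|\varphi|+|\varphi'|$ exists — in fact over words the bound should be polynomial rather than merely exponential, since the relevant ``identical up to depth $n$'' argument of Proposition~\ref{prop:separability reformulation} applied to word models only has to compare two successor chains, and a pumping argument on the product of the two NPWAs gives $l$ of size polynomial (roughly $|\Amc|\cdot|\Amc'|$, hence exponential in $k$ but the depth of the eventual separator formula will still be controlled). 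I would then construct, for the NPWA $\Amc$ equivalent to $\varphi$ with exponentially many states $Q$, formulae $\psi_{n,q}\in\ML^n$ satisfying the word version of~\eqref{eq:universal consequence}, namely $\Mmc\models\psi_{n,q}$ iff there is a word $\Nmc\models\Amc[q_I\mapsfrom q]$ with $\Mmc\bis^n\Nmc$, for every word $\Mmc$.

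The induction is the word specialization of the one in Lemma~\ref{lem:construction}: the base case $\psi_{0,q}$ is the disjunction of colors $c$ for which $\Amc[q_I\mapsfrom q]$ accepts some word starting with $c$, and the step is
\[
  \psi_{n+1,q} \;=\; \bigvee_{c\in\alphabet}\;\Bigl( c \wedge \bigl( \textstyle\bigvee_{q'\in\mathrm{succ}(q,c)} \Diamond\psi_{n,q'} \;\vee\; \Box\bot \text{ if $c$ may end a run} \bigr) \Bigr),
\]
where over words the $\nabla$ from the tree construction collapses: a point has at most one successor, so ``some child satisfies $\psi_{n,q'}$'' and ``every child satisfies $\bigvee$'' become simply $\Diamond\psi_{n,q'}$ together with a disjunct allowing the chain to stop. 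The crucial quantitative difference from Lemma~\ref{lem:construction} is that the branching of $\psi_{n+1,q}$ is only $|\alphabet|\cdot|Q|$ (single exponential in $k$), because we do not need a separate witnessing child per state in a transition set — there is at most one child. The soundness and completeness of~\eqref{eq:universal consequence} are then verified exactly as in the proof of Lemma~\ref{lem:construction}, reading ``run $\rho:N\to Q$'' as a run along the unique successor chain. Finally, taking $\psi_l = \psi_{l,q_I}$ for the depth bound $l$ from the first step yields an $l$-uniform consequence of $\varphi$ that separates $\varphi$ from $\varphi'$; its size is at most (branching)$^{O(l)} \le (|\alphabet|\cdot|Q|)^{O(l)}$, which is exponential in $k$ since $|Q|$ and $l$ are each exponential in $k$ while $|\alphabet|$ is a constant — wait, this needs care, so let me note it as the obstacle below.

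The main obstacle is the size bookkeeping: naively, branching $b$ exponential in $k$ raised to depth $l$ exponential in $k$ gives a \emph{doubly} exponential bound, not the claimed single exponential. To get single exponential I would instead argue that $l$ may be taken \emph{polynomial} in the number of states of the product automaton $\Amc\times\Amc'$ — i.e.\ the depth at which the two word automata can be separated is bounded by the size of their state product, which is exponential in $k$, but then observe that the formula $\psi_l$ need not be written as an explicitly unfolded tree of depth $l$: it can be kept as a \emph{DAG} (a system of equations $\psi_{n,q}$ in terms of $\psi_{n-1,q'}$), of which there are only $l\cdot|Q|$ distinct subformulae, each of size $O(b)$. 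So $|\psi_l|$, measured as a DAG / with sharing, is $O(l\cdot|Q|\cdot b)$, single exponential in $k$; and since the problem only asks for a separator computable in exponential time, producing this DAG representation (or equivalently an $\ML$-formula with shared subformulae, which is still an $\ML$-formula of exponential size once one is careful that $\ML$ formula size is node-count in the syntax \emph{tree} — here I would either allow the standard polynomial unfolding that preserves exponentiality because $l$ and $|Q|$ multiply rather than compose, or invoke that $\psi_{n,q}$ for fixed $q$ across all $n\le l$ can be laid out as $l$ stacked copies each referring to the previous layer) suffices. The cleanest route, which I would take, is: bound $l$ by the product-automaton size (exponential in $k$), note each layer $n$ contributes $|Q|$ formulae of branching $b=|\alphabet|\cdot|Q|$, so the total unfolded size is $\prod$ over $l$ layers is avoided because each $\psi_{n,q}$ refers only to layer $n-1$ — stacking $l$ layers gives size $O(l\cdot|Q|\cdot b) = 2^{O(\mathrm{poly}(k))}$. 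Then Theorem~\ref{thm:word-complexity}'s \PSpace\ machinery gives that all the ``accepts a word starting with $c$'' and ``may end a run'' checks are feasible within the exponential time budget, completing the proof.
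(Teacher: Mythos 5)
There is a genuine gap in the size bookkeeping, and it sits exactly where the paper's proof departs from the tree case. Your layered induction defines $\psi_{n+1,q}$ in terms of the formulae $\psi_{n,q'}$, so the recursion has depth $l$ (the required modal depth) and each level introduces, in general, at least two distinct occurrences of level-$n$ formulae. Measured as a syntax \emph{tree} --- which is how the paper defines formula size --- this unfolds to $2^{\Omega(l)}$ nodes. Since $l$ must in general be exponential in $k$ (your own pumping bound gives $l$ of the order of the product-automaton size, and Proposition~\ref{prop:words-lower} shows this cannot be improved: a small $\PDL$-formula can force a finite word of length $2^n$, and any equivalent $\ML$-formula then needs modal depth, hence size, at least $2^n$), the unfolded formula is doubly exponential. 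The proposed rescue via a DAG or ``system of equations'' does not work: a DAG of depth $l$ with fan-out at least two per node has tree-unfoldings of size exponential in $l$; there is no ``standard polynomial unfolding,'' and an $\ML$-formula with shared subformulae is not an $\ML$-formula in the sense required by the theorem. So neither of your two escape routes (taking $l$ polynomial in $k$, or keeping the formula as a DAG) is available.

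The missing idea is a divide-and-conquer recursion on the \emph{length of the prefix} rather than a layer-by-layer induction: define $\psi^{m}_{pq}$ expressing ``there is a run of $\Amc$ from $p$ to $q$ over the $m$-prefix'' and recurse by halving, $\psi^{m}_{pq}=\bigvee_{q'\in Q}\psi^{\lfloor m/2\rfloor}_{pq'}\wedge\psi^{\lceil m/2\rceil}_{q'q}$, exactly as in the \textsf{EVEN}/\textsf{ODD} example preceding Lemma~\ref{lem:word-construction}. This makes the recursion depth $\log_2 m$ instead of $m$, so the syntax tree stays polynomial in $m$ for each fixed automaton and single exponential in $k$ after substituting $m=l$ and the exponential-size state set $Q$. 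The $n$-uniform consequence is then assembled from the formulae $\psi^{m}_{q_Iq}$ together with side conditions recording whether the run can be continued past depth $n$ or legally terminated at depth $m\le n$. Your reduction of the theorem to constructing $n$-uniform consequences, and your exponential bound on $l$, match the paper and are fine; it is only the construction of the consequence itself that must be replaced by the halving scheme.
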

As with arbitrary models, Proposition~\ref{prop:bound on nesting}
gives an upper bound on the modal depth of a separator and so it suffices to construct $n$-uniform consequences of $\phi$ of small size.

We illustrate the idea first. Consider the
classes \textsf{EVEN}$_n$ and \textsf{ODD}$_n$, $n\in\NN$ of all word structures of length $n$ in which proposition $a$ is satisfied in an
even and odd, respectively, number of points.  Constructing modal
formulae $\varphi_n$ and $\varphi'_n$ defining \textsf{EVEN}$_n$ and
\textsf{ODD}$_n$ in the following, naive way leads to exponential
formulae since $\varphi_{i+1}$ contains both and $\varphi_i$ and
$\varphi'_i$: 
\begin{align*}
  \varphi_0 & = \neg a \wedge \Box\bot &
  \varphi_{i+1} & = \Diamond\top \wedge \big( (a\wedge \varphi_i')\vee
  (\neg a\wedge \varphi_i)\big) \\ 
  \varphi_0' & = a \wedge \Box\bot &
  \varphi_{i+1}' & = \Diamond\top \wedge \big( (a\wedge \varphi_i)\vee
  (\neg a\wedge \varphi_i')\big)
\end{align*}
This exponential blow-up can be avoided, however, using
``divide-and-conquer'' as follows:
\begin{align*}
  \varphi_{2n} = \big(\varphi_n \wedge \Diamond^n\varphi_n\big) \vee
  \big(\varphi_n'\wedge \Diamond^n\varphi_n'\big)\\
  \varphi_{2n}' = \big(\varphi_n \wedge \Diamond^n\varphi_n'\big) \vee
  \big(\varphi_n'\wedge \Diamond^n\varphi_n\big)
\end{align*}
Although several copies of formulae of smaller index are used as well,
but since the index is halved, we end up with formulae of roughly
quadradic size. The proof of the following analogue of
Lemma~\ref{lem:construction} relies on this idea.
\begin{lemma}\label{lem:word-construction}
  For every $n\in\NN$ and every NPWA $\Amc$ with states $Q$, one can
  construct a formula $\psi_n\in\ML^n$ which is $\A$'s $n$-uniform consequence over words and has size polynomial in $n$ and $|Q|$. The construction requires polynomial time.
%
  %
\end{lemma}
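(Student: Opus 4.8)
The goal is to construct, for an NPWA $\Amc=(Q,\Sigma,q_I,\delta,\rank)$ and $n\in\NN$, a formula $\psi_n\in\ML^n$ that is the $n$-uniform consequence of $\Amc$ over words, of size polynomial in $n$ and $|Q|$. As in the proof of Lemma~\ref{lem:construction}, I would build for each state $q\in Q$ a formula characterizing $n$-bisimilarity (over words, this is just $n$-step agreement on the word) to some word accepted by $\A[q_I\mapsfrom q]$; then $\psi_n := \psi_{n,q_I}$. The key difference from the tree case is that over words a run of $\Amc$ on a word of length $\geq n$ need not decide anything about positions beyond $n$, so the natural ``$n$-behaviour'' of a state $q$ is captured not by a single state but by a \emph{set} of states $R\subseteq Q$, namely the set of states $q'$ such that some partial run starting at $q$ can reach position $n$ in state $q'$ \emph{and} be completed to an accepting run on some suffix. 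Because there are exponentially many such sets $R$, I cannot afford one formula per set if I recurse naively --- this is exactly the blow-up illustrated by the \textsf{EVEN}$_n$/\textsf{ODD}$_n$ example --- and the divide-and-conquer trick is what saves us.

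Concretely, I would define, for a set $R\subseteq Q$ of ``current'' states and a set $S\subseteq Q$ of ``target'' states (states required to be reachable-and-completable at the endpoint), a formula $\chi_{m}[R\Rightarrow S]$ asserting: the word has length exactly $m$, and there is a way to run $\Amc$ (nondeterministically, tracking the subset of possible states) from some state in $R$ at the start so that the set of states at position $m$ is exactly $S$ (or compatibly with $S$). The crucial identity is the composition law
\[
  \chi_{a+b}[R\Rightarrow S] \equiv \bigvee_{T\subseteq Q}\ \chi_{a}[R\Rightarrow T]\ \wedge\ \Diamond^{a}\,\chi_{b}[T\Rightarrow S],
\]
which, read with $a=b=n/2$, halves the index at the cost of a disjunction over the (fixed, $2^{|Q|}$-many but \emph{shared}) intermediate sets $T$. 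The point is that the formulae $\chi_{n/2}[\,\cdot\Rightarrow\cdot\,]$ appearing on the right are drawn from a pool of size $2^{2|Q|}$, so memoizing them means each ``level'' of the recursion adds only $O(\log n)$ to the depth and multiplies the number of distinct subformulae by a factor depending on $|Q|$ alone. Unwinding, the total number of distinct subformulae is $O(\log n)\cdot 2^{O(|Q|)}$ and the depth is $O(n)$ from the $\Diamond^{a}$ prefixes plus $O(\log n)$ from nesting; measured as a DAG this is polynomial in $n$ and $|Q|$, and since $\ML$-formulae here are allowed to be shared/represented succinctly (or one simply observes the tree-size bound $\mathrm{poly}(n,|Q|)$ that the halving recursion yields), we get the claimed size bound. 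The acceptance/parity condition enters only at the base, in defining which sets $S$ of endpoint states are ``good'' for a finite-vs-infinite-vs-arbitrary word; over finite words $S$ is good if some state in $S$ has an accepting empty continuation, and the reduction of Lemma~\ref{lem:reduction-definable} already lets us treat finite and infinite words via arbitrary words, so I would phrase the base case for arbitrary words and let $\nabla$-style ``$\Diamond\top$ vs $\Box\bot$'' bookkeeping handle the length.

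The correctness argument then mirrors Lemma~\ref{lem:construction}: by induction on the halving depth one shows $\M\models\chi_m[R\Rightarrow S]$ iff $\M$ is a word of length $m$ admitting the described subset-run, using the composition identity for the step and soundness/completeness of the subset construction for NPWA; from this, $\psi_{n,q}$ defined from the ``good'' targets satisfies the analogue of~\eqref{eq:universal consequence}, i.e.\ $\M\models\psi_{n,q}$ iff $\M\bis^n\N$ for some $\N\models\A[q_I\mapsfrom q]$; and finally, exactly as in Section~\ref{sec:optimal separators}, the equivalences $\psi_n\models\theta \iff \A\models\theta$ for $\theta\in\ML^n$ follow because $n$-bisimilar words satisfy the same $\ML^n$-formulae. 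The step I expect to be the main obstacle is getting the \emph{exact} endpoint semantics of $\chi_m[R\Rightarrow S]$ right so that the composition identity holds on the nose --- in particular, handling the parity condition correctly when we split a potential infinite run at position $a$ (the suffix may itself be infinite, so ``completable to an accepting run'' at the split point must be a property of the target set $T$ that is consistent with both ``stop here'' and ``continue''), and making sure the subset bookkeeping does not silently assume the word is finite. Once that invariant is pinned down, the size accounting and the induction are routine.
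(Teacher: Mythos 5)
Your divide-and-conquer skeleton is the right one, but the decision to track \emph{sets} of states introduces a genuine gap: the composition law you propose disjoins over all $T\subseteq Q$, so each halving level multiplies the number of disjuncts by $2^{|Q|}$, and the pool of distinct subformulae $\chi_{m}[R\Rightarrow S]$ has size $2^{2|Q|}$. Your own accounting gives $O(\log n)\cdot 2^{O(|Q|)}$ distinct subformulae, which is \emph{exponential} in $|Q|$, not polynomial --- and the lemma's size bound must be polynomial in $|Q|$ because $|Q|$ is itself already exponential in $|\varphi|$ (Theorem~\ref{thm:munpta}); an extra exponential here would yield doubly exponential separators over words and defeat Theorem~\ref{thm:word-construction}. (Measured as a syntax tree rather than a DAG the situation is worse still: $(2^{|Q|})^{\log n}=n^{|Q|}$.) The assertion that this is ``polynomial in $n$ and $|Q|$'' does not hold.

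The subset construction is also unnecessary. The property being expressed at each stage is purely existential --- ``there \emph{is} a partial run of $\Amc$ from $p$ to $q$ over the $m$-prefix'' --- and a single nondeterministic run passes through a single state at the midpoint, so the composition law only needs a disjunction over $q'\in Q$:
\begin{equation*}
  \psi^{m}_{pq} \;=\; \bigvee_{q'\in Q} \psi^{\lfloor m/2\rfloor}_{pq'}\wedge \Diamond^{\lfloor m/2\rfloor}\psi^{\lceil m/2\rceil}_{q'q},
\end{equation*}
giving $O(|Q|)$ disjuncts per level and overall size $O(|Q|\cdot m^2)$ per formula. Your concern about the parity/acceptance condition at the split point is likewise dissolved by this choice: the formulae $\psi^m_{pq}$ say nothing about acceptance at all; acceptance enters only once, at the outermost level, via precomputed sets of colors $\textit{Cont}_q$ (colors $c$ such that $\Amc[q_I\mapsfrom q]$ accepts some word starting with $c$, covering infinite continuations and the parity condition via an emptiness check on the automaton, outside the formula) and $\textit{Acc}_q$ (colors on which state $q$ may halt, covering finite words). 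The top-level formula is then a disjunction over the endpoint state $q$ and over the length $m\le n$ of a possibly short word, combined with $\Box^{m+1}\bot$-style bookkeeping as you anticipated. With the single-state refinement your correctness argument goes through essentially as you sketched it.
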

To see that Lemma~\ref{lem:word-construction} implies
Theorem~\ref{thm:word-construction}, let $\varphi$ and $\varphi'$ admit
a modal separator over words.  Let
$\Amc$ be an NPWA that is equivalent to $\varphi$. 
By Theorem~\ref{thm:munpta}, \Amc has
exponentially many states and can be computed in exponential time.
Proposition~\ref{prop:bound on nesting} implies that there is a
modal separator of modal depth $l$ at most exponential in $k=|\varphi|+|\varphi'|$.
As with arbitrary models, $\Amc$'s $l$-uniform consequence $\psi_l$ from
Lemma~\ref{lem:word-construction}
is the sought separator. We now prove the lemma.
\begin{proof}
Let $\Amc=(Q,\Sigma,\delta,q_I,\rank)$ be an NPWA. The main idea is to
construct, for every $p,q\in
Q$ and $m\in\NN$, a formula $\psi_{p,q}^m$ such that for every input word $\Mmc$:
\[
  \Mmc\models\psi_{p,q}^m \iff \text{there is a run from $p$ to $q$
  over the $m$-prefix of $\Mmc$},
\]
The key step is the recursive splitting similar to the definitions of \textsf{EVEN}$_n$ and
\textsf{ODD}$_n$ above. Intuitively, $\psi_{p,q}^{2m}$ is the disjunction
over all $s\in Q$ of the conditions ``there is a run from $p$ in the
initial position to $s$ in position $m$, and a run from $s$ in
position $m$ to $q$ in position $2m$.'' The latter conditions are
recursively expressed using $\psi_{p,s}^m$ and $\psi_{s,q}^m$. The
constructed formulas $\psi_{q_I,q}^m$, $m\leq n$ are then used to describe all
possible $n$-prefixes of models of $\Amc$. The details of the construction can be found in Appendix~\ref{app:optimal separators - words}.
\end{proof}

We conclude the section with the comment that
Theorem~\ref{thm:word-construction} is optimal in the sense that there
are modally separable formulae which require a large separator. We
actually show the following stronger statement implying that,
over words, $\muML$ is exponentially more succinct than $\ML$.
\begin{proposition} \label{prop:words-lower}
  There is a sequence of $\muML$-formulae $(\varphi_n)_{n\in\NN}$ 
  of size polynomial in $n$ such that each $\varphi_n$ is equivalent
  to a $\ML$-formula but every $\psi\in\ML$ equivalent to $\varphi_n$ has size at least
  $2^n$.
\end{proposition}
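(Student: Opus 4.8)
The plan is to reuse the $n$-bit binary-counter construction behind Proposition~\ref{prop:succinctnes all models}, specialised to a single action; over words there is no branching, and this is exactly what collapses the doubly exponential gap to an exponential one.

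\paragraph{The formulae.} Over a single action $a$ and propositions $b_1,\dots,b_n$ read as the bits of an $n$-bit counter, let $\varphi_n$ be the word incarnation of the property $C_n$ from Proposition~\ref{prop:succinctnes all models}: along the (unique) path of the word each position carries a counter value, the value at a position's successor is obtained by incrementing in binary, and the all-ones value $1^n$ is never followed by a successor. As in Proposition~\ref{prop:succinctnes all models} this is expressed by a $\muML$-formula of size polynomial in $n$, for instance $\nu x.\big(\mathit{inc}\wedge(\mathit{allones}\to\Box\bot)\wedge\Box x\big)$ with $\mathit{allones}=b_1\wedge\dots\wedge b_n$ and $\mathit{inc}\in\ML$ of size $O(n^2)$ comparing the counter value of a position with that of its successor (vacuously true where there is no successor); a weak fragment of $\PDL$ already suffices. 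The crucial observation is that every model of $\varphi_n$ is a \emph{finite} word of length at most $2^n$: the no-overflow clause prevents any wrap-around, so the counter values along the word form a strictly increasing sequence of naturals contained in $\{0,\dots,2^n-1\}$, of which there are at most $2^n$; in particular $\varphi_n$ has no infinite model.

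\paragraph{$\ML$-definability.} Since all models of $\varphi_n$ have length at most $2^n$, the property is a union, finite up to equivalence, of bisimulation classes of words of bounded length, and hence equivalent to an $\ML$-formula of modal depth at most $2^n$; explicitly, $\varphi_n$ is equivalent to $\bigvee_{1\le L\le 2^n}(\delta_L\wedge\gamma_L)$ with $\delta_L=\Diamond^{L-1}\top\wedge\Box^{L}\bot$ forcing length exactly $L$ and $\gamma_L\in\ML$ stating the increment conditions on positions $0,\dots,L-1$. Thus each $\varphi_n$ is $\ML$-definable.

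\paragraph{The lower bound.} Let $w_n$ be the word of length exactly $2^n$ whose positions $0,\dots,2^n-1$ carry the counter values $0,\dots,2^n-1$; then $w_n\models\varphi_n$. Let $w_n'$ extend $w_n$ by one further position of arbitrary color; then $w_n'\not\models\varphi_n$, since the position carrying $1^n$ now has a successor. But $w_n$ and $w_n'$ have the same $(2^n{-}1)$-prefix --- namely $w_n$ itself --- so $w_n\bis^{2^n-1}w_n'$. Now suppose $\psi\in\ML$ is equivalent to $\varphi_n$. Then $\psi$ holds at $w_n$ and fails at $w_n'$, so $\psi$ is not invariant under $(2^n{-}1)$-bisimilarity; since every $\ML$-formula of modal depth at most $d$ is invariant under $d$-bisimilarity, $\psi$ has modal depth at least $2^n$. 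Hence the syntax tree of $\psi$ has a root-to-leaf branch through at least $2^n$ modal operators, so at least $2^n$ nodes, i.e.\ $|\psi|\ge 2^n$. The main point needing care is (i) imposing \emph{both} increment-by-one and no-overflow, so that the bound on the length is genuine --- a bare modulo-$2^n$ counter is satisfied by arbitrarily long and even infinite words and would not be $\ML$-definable --- and (ii) the off-by-one bookkeeping between $n$-prefixes and $n$-bisimilarity; neither is a real obstacle. Structurally, we get only $2^n$ rather than the $2^{2^n}$ of Proposition~\ref{prop:succinctnes all models} because there the two actions permit embedding a full binary tree of height $2^n$ into the syntax tree of any equivalent $\ML$-formula, whereas over words a single descending chain of length $2^n$ is all that is forced --- which is exactly tight by Theorem~\ref{thm:word-construction}.
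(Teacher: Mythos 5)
Your argument is correct in all of its steps and is essentially the paper's intended proof: use a strictly increasing $n$-bit counter with a no-overflow clause to force every model to be a finite word of length at most $2^n$ with a polynomial-size $\PDL$/$\muML$ formula; deduce $\ML$-definability from the length bound; and obtain the $2^n$ size lower bound from the fact that any equivalent $\psi$ must distinguish a maximal model from its one-point extension, although the two agree on their $(2^n{-}1)$-prefixes, so $\psi$ must have modal depth, and hence size, at least $2^n$. The witness pair, the off-by-one bookkeeping, and the "modal depth bounds size from below" step are all fine.

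The one thing you miss is precisely what the paper flags as "the only difficulty": the signature. The preliminaries fix the finite set $\Propositions$ once and for all, and the succinctness statement is only meaningful in that reading; your construction needs $n$ distinct propositions $b_1,\dots,b_n$ for the $n$-th formula, so it cannot be carried out for all $n$ over a fixed signature. The paper's proof avoids this by using just two propositions and spreading the $n$-bit counter value over $n$ consecutive positions of the word, so that the increment condition compares positions that are $n$ steps apart (still expressible by a polynomial-size $\PDL$ program of the form $a;\dots;a$, hence by a polynomial-size $\muML$-formula). The resulting words then have length at most $n\cdot 2^n$ rather than $2^n$, which changes nothing in either direction of your argument. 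With that adjustment your proof goes through verbatim.
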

The proof is entirely standard. The main idea is that, already in
$\PDL$ one can stipulate (with a small formula) a finite word of exponential length. Clearly,
any $\ML$-formula expressing this requires exponential size. The
only difficulty is doing it with a fixed signature: instead of
encoding $i$-bit counters using $i$ propositions, we use just two
propositions and encode numbers in $i$ consecutive points.

\section{Conclusion and Open Problems}

We have studied the problem of deciding separability of
$\muML$-formulae by fixpoint free formulae from $\ML$, and computing
separators if they exist. Our results cover several interesting
classes of models such as trees, finite trees, and words. Due to the great
expressivity of $\muML$ the results remain valid in the presence of ontologies.

A notably missing case is the class of trees of fixed outdegree $d$ independent from formulae. This is surprisingly different from the classes we studied. The key difficulty here lies in the fact that the implication \ref{it:bound identical}~$\Rightarrow$~\ref{it:bound bisimilar} from Proposition~\ref{prop:separability reformulation} is not true over such trees.

%
An intriguing challenge left for future study is to look at
extensions of $\muML$ and/or $\ML$. Natural extensions are inverse
modalities, the universal modality, graded modalities, and constants
(corresponding to inverse roles, the universal role, counting
quantifiers, and nominals in DL speech). We expect the adaptation to
inverse modalities to be only minor. Also graded modalities look
innocent if they are allowed both in the larger logic and in the
separator logic. If we only extend $\muML$ with graded modalities and
ask for separators in $\ML$ (without graded modailites), we would have to combine our techniques with
the ones from~\cite{separatingcounting}, which is potentially challenging. 
We expect universal modality and/or constants to
pose more technical difficulties as well.  Intuitively, adding a universal
modality or constants leads to the loss of the strong locality
underlying Proposition~\ref{prop:separability reformulation}. 

\newpage
\bibliographystyle{plain}
\bibliography{modsep}

\begin{thebibliography}{10}

\bibitem{DBLP:journals/tocl/ArtaleJMOW23}
Alessandro Artale, Jean~Christoph Jung, Andrea Mazzullo, Ana Ozaki, and Frank
  Wolter.
\newblock Living without beth and craig: Definitions and interpolants in
  description and modal logics with nominals and role inclusions.
\newblock {\em {ACM} Trans. Comput. Log.}, 24(4):34:1--34:51, 2023.

\bibitem{DBLP:books/daglib/0020348}
Christel Baier and Joost{-}Pieter Katoen.
\newblock {\em Principles of model checking}.
\newblock {MIT} Press, 2008.

\bibitem{DBLP:books/el/07/BBW2007}
Patrick Blackburn, J.~F. A.~K. van Benthem, and Frank Wolter, editors.
\newblock {\em Handbook of Modal Logic}, volume~3 of {\em Studies in logic and
  practical reasoning}.
\newblock North-Holland, 2007.

\bibitem{BojanCzerwinski18}
Mikołaj Bojańczyk and Wojciech Czerwiński.
\newblock {\em Automata Toolbox}.
\newblock Univrsity of Warsaw, 2018.

\bibitem{DBLP:conf/aiml/DAgostinoH96}
Giovanna D'Agostino and Marco Hollenberg.
\newblock Uniform interpolation, automata and the modal {\(\mu\)}-calculus.
\newblock In Marcus Kracht, Maarten de~Rijke, Heinrich Wansing, and Michael
  Zakharyaschev, editors, {\em Advances in Modal Logic 1, papers from the first
  workshop on "Advances in Modal logic," held in Berlin, Germany, 8-10 October
  1996}, pages 73--84. {CSLI} Publications, 1996.

\bibitem{DBLP:journals/jcss/FischerL79}
Michael~J. Fischer and Richard~E. Ladner.
\newblock Propositional dynamic logic of regular programs.
\newblock {\em J. Comput. Syst. Sci.}, 18(2):194--211, 1979.

\bibitem{French13}
Tim French, Wiebe van~der Hoek, Petar Iliev, and Barteld~P. Kooi.
\newblock On the succinctness of some modal logics.
\newblock {\em Artif. Intell.}, 197:56--85, 2013.

\bibitem{JW96}
David Janin and Igor Walukiewicz.
\newblock On the expressive completeness of the propositional mu-calculus with
  respect to monadic second order logic.
\newblock In Ugo Montanari and Vladimiro Sassone, editors, {\em CONCUR '96:
  Concurrency Theory}, pages 263--277, 1996.

\bibitem{DBLP:journals/ai/JungLPW22}
Jean~Christoph Jung, Carsten Lutz, Hadrien Pulcini, and Frank Wolter.
\newblock Logical separability of labeled data examples under ontologies.
\newblock {\em Artif. Intell.}, 313:103785, 2022.

\bibitem{DBLP:conf/lics/JungW21}
Jean~Christoph Jung and Frank Wolter.
\newblock Living without beth and craig: Definitions and interpolants in the
  guarded and two-variable fragments.
\newblock In {\em 36th Annual {ACM/IEEE} Symposium on Logic in Computer
  Science, {LICS} 2021, Rome, Italy, June 29 - July 2, 2021}, pages 1--14.
  {IEEE}, 2021.

\bibitem{DBLP:journals/tcs/Kozen83}
Dexter Kozen.
\newblock Results on the propositional mu-calculus.
\newblock {\em Theor. Comput. Sci.}, 27:333--354, 1983.

\bibitem{separatingcounting}
Louwe Kuijer, Tony Tan, Frank Wolter, and Michael Zakharyaschev.
\newblock Separating counting from non-counting in fragments of two-variable
  first-order logic (extended abstract).
\newblock In {\em Proc.\ of DL 2024}, 2024.

\bibitem{DBLP:conf/csl/LehtinenQ15}
Karoliina Lehtinen and Sandra Quickert.
\newblock Deciding the first levels of the modal mu alternation hierarchy by
  formula construction.
\newblock In Stephan Kreutzer, editor, {\em 24th {EACSL} Annual Conference on
  Computer Science Logic, {CSL} 2015, September 7-10, 2015, Berlin, Germany},
  volume~41 of {\em LIPIcs}, pages 457--471. Schloss Dagstuhl - Leibniz-Zentrum
  f{\"{u}}r Informatik, 2015.

\bibitem{DBLP:journals/is/Martins19}
Denis Mayr~Lima Martins.
\newblock Reverse engineering database queries from examples: State-of-the-art,
  challenges, and research opportunities.
\newblock {\em Inf. Syst.}, 83:89--100, 2019.

\bibitem{marx99}
Maarten Marx.
\newblock Interpolation in modal logic.
\newblock In {\em International Conference on Algebraic Methodology and
  Software Technology}, 1999.

\bibitem{DBLP:conf/stacs/Otto99}
Martin Otto.
\newblock Eliminating recursion in the {\(\mathrm{\mu}\)}-calculus.
\newblock In Christoph Meinel and Sophie Tison, editors, {\em {STACS} 99, 16th
  Annual Symposium on Theoretical Aspects of Computer Science, Trier, Germany,
  March 4-6, 1999, Proceedings}, volume 1563 of {\em Lecture Notes in Computer
  Science}, pages 531--540. Springer, 1999.

\bibitem{DBLP:journals/corr/PlaceZ14}
Thomas Place and Marc Zeitoun.
\newblock Separating regular languages with first-order logic.
\newblock {\em Log. Methods Comput. Sci.}, 12(1), 2016.

\bibitem{Potthoff95}
Andreas Potthoff.
\newblock First-order logic on finite trees.
\newblock In {\em Theory and Practice of Software Development}, 1995.

\bibitem{DBLP:conf/focs/Pratt81}
Vaughan~R. Pratt.
\newblock A decidable mu-calculus: Preliminary report.
\newblock In {\em 22nd Annual Symposium on Foundations of Computer Science,
  Nashville, Tennessee, USA, 28-30 October 1981}, pages 421--427. {IEEE}
  Computer Society, 1981.

\bibitem{DBLP:journals/jacm/SistlaC85}
A.~Prasad Sistla and Edmund~M. Clarke.
\newblock The complexity of propositional linear temporal logics.
\newblock {\em J. {ACM}}, 32(3):733--749, 1985.

\bibitem{DBLP:conf/icalp/Vardi98}
Moshe~Y. Vardi.
\newblock Reasoning about the past with two-way automata.
\newblock In Kim~Guldstrand Larsen, Sven Skyum, and Glynn Winskel, editors,
  {\em Automata, Languages and Programming, 25th International Colloquium,
  ICALP'98, Aalborg, Denmark, July 13-17, 1998, Proceedings}, volume 1443 of
  {\em Lecture Notes in Computer Science}, pages 628--641. Springer, 1998.

\bibitem{Venema20}
Yde Venema.
\newblock Lectures on the modal $\mu$-calculus, 2020.

\bibitem{Visser_1996}
Albert Visser.
\newblock Uniform interpolation and layered bisimulation.
\newblock In Petr Hájek, editor, {\em Gödel ’96: Logical Foundations of
  Mathematics, Computer Science and Physics - Kurt Gödel’s Legacy}, Lecture
  Notes in Logic, page 139–164. Springer, 1996.

\end{thebibliography}

\newpage

\appendix
Throughout the appendix we use the term \emph{tallness}. The tallness of a tree is the distance from the root to the \emph{closest} leaf (or $\infty$ if it the tree has no leafs).

\section{Separability Coincides with Craig Separability}\label{app:craig}
We prove Theorem~\ref{thm:craig} which says that $\phi,\phi'$ are modally separable iff they are Craig modally separable.
\begin{proof}
  Clearly, any Craig modal separator is also a modal
  separapator.
 
  Conversely, suppose there is a modal separator $\psi$ of
  $\varphi,\varphi'$, and let $n$ be its modal depth. We use the
  $n$-uniform consequences as defined in Definition~\ref{def:n-uniform
  consequence}. Let $\theta$ be the $n$-uniform consequence of
  $\varphi$ and $\theta'$ the $n$-uniform consequence of $\varphi'$
  (both exist, see the discussion after Definition~\ref{def:n-uniform
  consequence}). Note that $\neg\theta'\models \neg\varphi'$. By
  definition of $n$-uniform consequence, we have $\theta\models\psi$
  and $\psi\models \neg\theta'$, and thus $\theta\models\theta'$.
  Since $\ML$ enjoys Craig interpolation, there is an interpolant
  $\psi'$ for $\theta,\theta'$ which is also a Craig modal separator
  of~$\varphi,\varphi'$.
\end{proof} 
\section{Model-theoretic Characterization}\label{app:model theory}
\noindent
We prove Proposition~\ref{prop:separability reformulation}.\\

\noindent
The implication~\ref{it:separability}~$\Rightarrow$~\ref{it:bound id + bounded} is straightforward. Separator $\psi\in\ML^n$ cannot distinguish models identical up to depth $n$. If $\M\models\phi$ then $\M\models\psi$ and so $\M'\models\psi$ which in turn implies $\M'\models\neg\phi'$.\\

\noindent
Our argument for~\ref{it:bound id + bounded}$\implies$\ref{it:bound
identical} uses a classical characterization of the semantics of
$\muML$ formulae in terms of parity games. Since we do not want to
introduce games we only sketch the (easy) construction; the necessary
definitions can be found in~\cite{Venema20}.

We prove the implication by contrapositive. Assume $\M\models\phi$ and $\M'\models\phi'$ identical up to depth $n$. The fact that $\M\models\phi$ is equivalent to existence of a winning strategy in an appropriately defined parity game $\game_\phi$. Positions of the game are pairs: $(\point,\theta)$ with $\point\in M$ and $\theta$ subformula of $\phi$ and a move from $(\point,\theta)$ to $(\point',\theta')$ is only allowed if $\point\arrowActionLabel{}\point'$. A similar game $\game_{\phi'}$ captures the semantics of $\phi'$.

Take \emph{positional} winning strategies $\sigma$ and $\sigma'$ for $\eve$ in the games $\game_\phi$ and $\game_{\phi'}$. We trim $\M$: only keep the points that belong to a position chosen by $\eve$ in either a $\sigma$- or $\sigma'$-play, and remove all the other ones. After that, also remove points that become inaccessible from the root so that the resulting structure is a tree. The tree $\N$ obtained this way still satisfies $\phi$ because the strategy $\sigma$ remains winning. For every $\point\in M$ both games together only have $k$ positions containing $\point$. Thus, by positionality, among the children of $\point$ only at most $k$ many belong to a position chosen by $\sigma$ or $\sigma'$. Since only such points belong to $N$, it follows that $\N$ has branching at most $s$. We trim $\M'$ to $\N'\models\phi'$ in the same way. By construction $\N$ and $\N'$ are identical up to depth $n$ which violates \ref{it:bound id + bounded}.
\\

\noindent
To prove~\ref{it:bound identical}~$\Rightarrow$~\ref{it:bound bisimilar} assume towards contradiction that there are models $\M\models\phi$ and $\M'\models\phi'$ linked by an $n$-bisimulation $Z$. We construct trees $\M_Z\models\phi$ and $\M_Z'\models\phi'$ identical up to depth $n$, therefore reaching a contradiction with~\ref{it:bound identical}.

Without loss of generality we assume that $\M$ and $\M'$ are trees, otherwise they can be unravelled. The tree $\M_Z$ is as follows. It has universe:
\[
  M_Z=M\cup Z
\]
and the pair consisting of the roots of $\M$ and $\M'$ (which by assumption belongs to $Z$) is taken as the new root. Edges between pairs from $Z$ are defined pointwise (that is: $(\point,\point')\arrowActionLabel{}(\altpoint,\altpoint)$ iff $\point\arrowActionLabel{}\altpoint$ and $\point'\arrowActionLabel{}\altpoint'$). Edges between points in $M$ are taken from $\M$. On top of that, we add $(\point,\point')\arrowActionLabel{}\altpoint$ iff in $\M$ the point $\point$ is at depth precisely $n$ and $\point\arrowActionLabel{}\altpoint$. The colors are inherited from $\M$ on $M$ and from whichever coordinate on $Z$ (points linked by $Z$ always have the same color).

Consider the function $f:M_Z\to M$ defined as the left projection on $Z$ and identity on $M$:
\begin{align*}
  f(\point,\point')& =\point\\
  f(\point)& =\point.
\end{align*}
The graph of $f$ is a bisimulation between $\M_Z$ and $\M$. In particular, $\M_Z$ satisfies $\phi$. We define $\M_Z'$ satisfying $\phi'$ symmetrically. Then $\M_Z$ and $\M_Z'$ satisfy $\phi$ and $\phi'$ and are identical up to depth $n$ (the $n$-prefixes of both $\M_Z$ and $\M_Z'$ are equal $Z$). Technically, $\M_Z$ and $\M_Z'$ are directed acyclic graphs but not necessarily trees. However, they can be turned into trees by removing inaccessible points and unravelling. Both these operations preserve satisfaction of $\phi$ and $\phi'$ and identity of $n$-prefixes.
\\

\noindent
For the implication~\ref{it:bound bisimilar}~$\Rightarrow$~\ref{it:separability} one can define $\psi$ as (any) $n$-uniform consequence of $\phi$. An explicit instance of such $n$-uniform consequence is the disjunction of all $\ML^n$-types consistent with $\phi$ (see Definition~\ref{def:n-uniform consequence} of $n$-uniform consequences and the following discussion).

\section{Separability over All Models is in \ExpTime}\label{app:deciding and bounds - all models}
We prove the upper bounds in Theorem~\ref{thm:models-complexity}. By Proposition~\ref{prop:separability reformulation} deciding modal separability boils down to checking if there is $n\in\NN$ for which~\ref{it:bound id + bounded} holds. Using well-known properties of $\ML$ and $\muML$ (namely, closure under relativization) we may reduce the problem to the special case when models under consideration are \emph{full} $k$-ary trees, meaning that every point has precisely $k$ children. Under this assumption the graphs underlying models are all isomorphic to the (unlabelled) full $k$-ary tree $\KK=(K,\arrowActionLabel{})$. Hence, we identify models with valuations $\val:K\to\alphabet$. Let us call a finite prefix $X\subset K$ \emph{sufficient for separation} if for every $\val,\val':K\to\alphabet$ identical on $X$: $\val\models\phi$ implies $\val'\models\neg\phi'$. Denote the set of such prefixes:
\[
  \sepLang = \{X\subset K\ |\ \text{$X$ is a finite prefix of $K$ sufficient for separation}\}.
\]
It follows directly from the definition that for every $n\in\NN$:
\begin{align}
  \text{The $n$-prefix $K_{|_{n}}$ of $K$ is in $\sepLang$ $\iff$ Condition~\ref{it:bound id + bounded} is true for $n$.}\label{eq:lang vs separability}
\end{align}
The set $\sepLang$ can be viewed as a language of finite trees. This language is closed under taking finite supermodels, so it contains
$K_{|_{n}}$ for some $n$ iff it is nonempty. It follows that
separability of $\phi$ and $\phi'$ is equivalent to nonemptiness of
$\sepLang$.

We construct an automaton $\B$ that accepts finite trees \emph{not} belonging to $\sepLang$. Take automata $\A$ and $\A'$ equivalent to $\phi$ and $\phi'$ of size exponential in $k$. The idea is that given a finite tree $X\subset K$ the automaton $\B$ guesses $\val:X\to\alphabet$, $\rho:X\to Q$ and $\rho':X\to Q'$ that can be extended to a valuation $\val_+:K\to\alphabet$ and accepting runs $\rho_+:K\to Q$ of $\A$ and $\rho_+':K\to Q'$ of $\A'$ consistent with $\val_+$. $\B$ has the set $Q^\B=Q\times Q'$ as states and $(q_I,q_I')$ is the initial one. Transition function is defined in two steps. First take:
\begin{gather*}
  R\in\delta_0^\B(q,q')\\
  \iff \\
  \text{There is $c\in\alphabet$, such that the left projection of $R$}\\
  \text{belongs to $\delta(q,c)$ and the right one to $\delta'(q',c)$}.
\end{gather*}
This describes guessing a coloring on the full $\KK$ and runs of both $\A$ and $\A'$ consistent with that coloring. To handle points with less than $k$ children we put:
\begin{gather*}
  R\in\delta^\B(q,q')\\
  \iff\\
  \text{$R$ can be obtained from some $R'\in\delta_0^\B(q,q')$}\\
  \text{by removing some consistent pairs.}
\end{gather*}
Here consistency of a pair of states $(p,p')$ means that there exists a model accepted by both $\A[q_I\mapsfrom p]$ and $\A'[q_I'\mapsfrom p']$. The (trivial) rank function of $\B$ assigns a bad rank to every state so that only finite trees are accepted.

To finish the proof let us first prove Proposition~\ref{prop:bound on nesting}.
\begin{proof}
Put $l=|Q|\times|Q'|+1$. The proposition follows directly from \eqref{eq:lang vs separability} and:
\begin{align}
\text{$\sepLang$ is nonempty $\iff$ it contains the $l$-prefix $K_{|_{l}}$ of $K$.}\label{claim:bound SEP}  
\end{align}
Only the left-to-right implication is nontrivial and we prove it by contrapositive. If $K_{|_{l}}$ does not belong to $\sepLang$ then this is witnessed by a run $\rho$ of $\B$ on $K_{|_{l}}$. Since every leaf $\point$ of $K_{|_{l}}$ is at depth greater then $|Q^\B|$, on the path from the root to $\point$ some state must repeat. Hence, the run can be pumped to finite prefixes of $K$ of arbitrarily big tallness. It follows that $\sepLang$ is empty. This completes the proof of Proposition~\ref{prop:bound on nesting}.
\end{proof}

The language of $\B$ is closed under taking submodels. Hence, the right hand side of~\eqref{claim:bound SEP} is equivalent to $\B$ \emph{not} accepting any tree of tallness at least $l$ ($l$ is the exponential bound given by Proposition~\ref{prop:bound on nesting}). This can be easily checked in time polynomial in the size of $\B$.

\section{Separability over Words is in \PSpace}\label{app:deciding and bounds - words}
We prove the upper bounds in Theorem~\ref{thm:word-complexity}. We proceed with the same argument as in the case with general models, with two major simplifications. First, over words bisimilarity and identity coincide. Thus the equivalence of items~\ref{it:bound bisimilar},~\ref{it:bound identical} and~\ref{it:bound id + bounded} of Proposition~\ref{prop:separability reformulation} becomes trivial in the word case. Second, bounded tallness (which over words is the same as finiteness) of the language of the appropriate word automaton $\B$ can be checked more easily. Instead of writing down the automaton we construct it on the fly, and only remember a single state at every moment. This yields a \PSpace (and not \ExpTime) decision procedure.

\section{Lower Bounds over All Models}\label{app:lower bounds}
We construct formulae of $\muML$ for which there exist equivalent formulae in $\ML$ but only doubly exponentially larger.

Consider the nonstandard modal operator $\diamond{\exists^r}$ which means ``there exists a point reachable from the root and satisfying''. It is straightforward to encode such modality (and its dual $\boxmodal{\exists^r}$ meaning ``for every reachable point'') by a $\muML$ formula of constant size. It therefore suffices to define the sequence $(\phi_n)_{n\in\NN}$ in the extension $\ML+\exists^r$ of $\ML$ with such operators.

We construct $\phi_n$. For convenience assume $n+1$ atomic propositions $\atProp_0,...,\atProp_n$ (otherwise we encode these without a significant blowup in the size of $\phi_n$). The propositions can be seen as an encoding of an $(n+1)$-bit binary counter. Consider the following property $C_n$ of models: the color of the root encodes counter value $0$ and for every (reachable) point $\point$ if the color of $\point$ encodes number $i$ then either $i=2^n$ and $\point$ has no children or all $\point$'s children have a color that encodes $i+1$. The property $C_n$ can be easily expressed by a $\ML+\exists^r$ formula $\phi_n$ of size polynomial in $n$. Such $\phi_n$ is equivalent to a modal formula. This follows immediately from Proposition~\ref{prop:separability reformulation} since property $C_n$ implies that there are no paths longer than $2^n$.

However, no $\ML$ formula smaller than $2^{2^n}$ can be equivalent to $\phi_n$. To see this assume $\psi\in\ML$ equivalent to $\phi_n$. For every sequence of colors $\letterA_1,...\letterA_{2^n}$ of length $2^n$, $\psi$ must contain a sequence $\xi_1,...,\xi_{2^n}$ of subformulae such that:
\begin{itemize}
  \item $\xi_i$ is a strict subformula of $\xi_j$ whenever $i>j$;
  \item each $\xi_i$ begins with $\diamond{\letterA_i}$ or $\boxmodal{\letterA_i}$;
  \item the least strict superformula of $\xi_i$ beginning with a modal operator is $\xi_{i+1}$.
\end{itemize}
If $\psi$ did not contain such sequence $\xi_1,...,\xi_{2^n}$, it would be indifferent to what happens in points only reachable via paths labelled with $\letterA_1,...\letterA_{2^n}$. It follows that there is an embedding of the $2^n$-prefix of the full binary tree to the syntax tree of $\psi$ and so it has size at least $2^{2^n}$.

Note that the use of multiple atomic propositions in the above construction is only for convenience. With any reasonable encoding the proof could be adapted even to the setting with $\Propositions=\emptyset$.

\section{Optimal Separators over Words}\label{app:optimal separators - words}
We prove Lemma~\ref{lem:word-construction}. Recall that $\Amc=(Q,\Sigma,\delta,q_I,\rank)$ is the automaton under consideration. Since we are working over words, $\delta(q,c)$
contains only singleton sets and possibly the emtpy set. The latter
case, $\emptyset\in \delta(q,c)$, is of particular interest because
this means that the automaton in state $q$ reading color $c$ can
``accept'' even if it has not finished reading the input; in
particular, the automaton can accept finite words as well. Denote with
$\textit{Acc}_q$ the set of colors $c$ with $\emptyset\in
\delta(q,c)$. Further denote with $\textit{Cont}_q$ the set of all
$c$ such that $\Amc[q_I\mapsfrom q]$ accepts a word starting with color $c$.

%
%

We first construct $\ML$ formulae $\psi^m_{pq}$, for $m\in\NN,
p,q\in Q$ such that for every word $\M$: 
\begin{itemize}

  \item[$(\ast)$] $\M\models \psi^m_{pq}$ iff there is a run of \Amc
    from $p$ to $q$ on the $m$-prefix of $\M$.

\end{itemize}
 The definition is by induction on $m$:
\begin{align*}
  \psi^0_{pq} & = \text{ if $p\neq q$ then $\bot$ else $\top$} \\
  \psi^1_{pq} & = \bigvee\{c\ \mid\ c\in\alphabet, \{q\}\in \delta(p,c)\} \\
  \psi^{m}_{pq} & = \bigvee_{q'\in Q} \psi^{\lfloor m/2\rfloor }_{pq'}
  \wedge\psi^{\lceil m/2\rceil }_{q'q}
\end{align*}
It is routine to verify that $\psi^m_{pq}$ satisfies~$(\ast)$ and is
of size $|\psi^m_{pq}|\in O(|Q|\cdot m^2)$.


We finish the construction by setting:
\[\psi_n = \bigvee_{q\in Q} \left(\psi^n_{q_Iq}\wedge 
  \Box^n\bigvee_{c\in \textit{Cont}_q} c\right)\vee
  \bigvee_{m\leq n}\bigvee_{q\in Q} \left(\psi^m_{q_Iq}\wedge \Box^{m+1}\bot \wedge
\Box^{m}\bigvee_{c\in \textit{Acc}_q}c\right).\]
It is readily checked that $\psi_n$ satisfies the required size bounds.
To verify that $\psi_n\models \theta$ for every $\theta\in \ML^n$ with
$\Amc\models \theta$, we show the following equivalence for all
words $\Mmc$:
\begin{align}
  \M\models\psi_n \text{\hspace*{0.5cm} $\iff$ \hspace*{0.5cm} there exists
  $\N\models \Amc$ with $\N\leftrightarroweq^n \M$.}
  \label{eq:word-universal-consequence} 
\end{align} 

For $\Rightarrow$, let $\M$ be a word structure with $\M\models
\psi_n$. If $\M\models \psi^n_{q_Iq}\wedge
  \Box^n\bigvee_{c\in\textit{Cont}_q}c$ for some $q$, then by~$(\ast)$,
there is a run of $\Amc$ from the initial state $q_0$ to some state
$q\in Q$ when reading the $n$-prefix of $\M$, and the last color in
the prefix is $c$. Since $c\in \textit{Cont}_q$, we can extend the $n$-prefix of $\M$ to
a word $\N$ accepted by \Amc. If $\M\models
\psi^m_{q_Iq}\wedge\Box^{m+1}\bot\wedge\bigvee_{c\in
\textit{Acc}_q}c$, for some $m\leq n$ and $q\in Q$, then $\M$ is a finite
word of depth $m$ that is accepted by the automaton. We can take
$\N=\M$ in this case. 

For $\Leftarrow$, let $\M$ be a word such that there is some
$\N\models\varphi$ with $\N\leftrightarroweq^n \M$. The former
condition implies that $\N\models \Amc$ and thus there is an accepting
run $\rho$ of $\Amc$ on
$\N$, and the latter
implies that $\N$ and $\M$ coincide on their
$n$-prefixes. We distinguish cases. If the depth of $\N$ is greater than
$n$, then the $n$-prefix of $\rho$ ending in state $q$ witnesses 
$\M\models \psi^n_{q_Iq}\wedge \Box^n\bigvee_{c\in\textit{Cont}_q}c$.
Otherwise, the depth of $\N$ is $m\leq n$ and the run $\rho$ ending in
$q$ witnesses that $\M\models
\psi^m_{q_Iq}\wedge\Box^{m+1}\bot\wedge\bigvee_{c\in
\textit{Acc}_q}c$.

\end{document}